\begin{document}

\title*{Game theoretic modeling of pilot behavior during mid-air encounters}
\author{Ritchie Lee and David Wolpert}
\institute{Ritchie Lee \at Carnegie Mellon University Silicon Valley, NASA Ames Research Park, Mail Stop 23-11, Moffett Field, CA, 94035 \email{ritchie.lee@sv.cmu.edu}
\and David Wolpert \at NASA Ames Research Center, Mail Stop 269-1, Moffett Field, CA, 94035 
\email{david.h.wolpert@nasa.gov}}
%
%
\maketitle


\newacro{tcas}[TCAS]{Traffic Alert and Collision Avoidance System}
\newacro{ra}[RA]{Resolution Advisory}
\newacroplural{ra}[RAs]{Resolution Advisories}
\newacro{hitl}[HITL]{Human-In-The-Loop}
\newacro{nmac}[NMAC]{near mid-air collision}
\newacro{irac}[IRAC]{Integrated Resilient Aircraft Control}


\abstract{We show how to combine Bayes nets and game theory
to predict the behavior of hybrid systems involving both humans and automated components. We call this novel framework ``Semi Network-Form Games," and
illustrate it by predicting aircraft pilot behavior in potential near mid-air collisions. 
At present, at the beginning of such potential collisions,  a collision avoidance system in the aircraft cockpit advises the pilots what to do to avoid the collision.  However studies of mid-air encounters have found wide variability in pilot responses to avoidance system advisories.  In particular, pilots rarely perfectly execute the recommended maneuvers, despite the fact that the collision avoidance system's effectiveness relies on their doing so.
Rather pilots decide their actions based on all information available to them (advisory, instrument readings, visual observations).  We show how to build this aspect into a semi network-form game model of the encounter and then present computational simulations of the resultant model.}

\acresetall
\section{Introduction}
\label{sec:intro}
Bayes nets have been widely investigated and commonly used to describe stochastic systems \protect{\cite{BishopBook,Darwiche09,RussellBook}}.  Powerful techniques already exist for the manipulation, inference, and learning of probabilistic networks.  Furthermore, these methods have been well-established in many domains, including expert systems, robotics, speech recognition, and networking and communications \protect{\cite{KollerBook}}.  On the other hand, game theory is frequently used to describe the behavior of interacting humans \cite{Crawford02,Crawford08}.  A vast amount of experimental literature exists (especially in economic contexts, such as auctions and negotiations), which analyze and refine human behavior models \protect{\cite{CamererBook, Caplin10,Selten08}}.  These two fields have traditionally been regarded as orthogonal bodies of work.  However, in this work we propose to create a modeling framework that leverages the strengths of both.

Building on earlier approaches \protect{\cite{Camerer10,Koller03}}, we introduce a novel framework, ``Semi Network-Form Game," (or ``semi net-form game") that combines Bayes nets and game theory to model hybrid systems.  We use the term ``hybrid systems" to mean such systems that may involve multiple interacting human and automation components.  The semi network-form game is a specialization of the complete framework ``network-form game," formally defined and elaborated in \protect{\cite{WolpertNFG}}.

The issue of aircraft collision avoidance has recently received wide attention from aviation regulators due to some alarming \ac{nmac} statistics \protect{\cite{SfeArticle}}. Many discussions call into question the effectiveness of current systems, especially that of the onboard collision avoidance system.  This system, called ``\ac{tcas}," is associated with many weaknesses that render it increasingly less effective as traffic density grows exponentially.  Some of these weaknesses include complex contorted advisory logic, vertical only advisories, and unrealistic pilot models.  In this work, we demonstrate how the collision avoidance problem can be modeled using a semi net-form game, and show how this framework can be used to perform safety and performance analyses.

The rest of this chapter is organized as follows.  In Section~\ref{sec:snfg}, we start by establishing the theoretical fundamentals of semi net-form games.  First, we give a formal definition of the semi net-form game.  Secondly, we motivate and define a new game theoretic equilibrium concept called ``level-K relaxed strategies" that can be used to make predictions on a semi net-form game.  Motivated by computational issues, we then present variants of this equilibrium concept that improve both computational efficiency and prediction variance.  In Section~\ref{sec:encounters}, we use a semi net-form game to model the collision avoidance problem and discuss in detail the modeling of a 2-aircraft mid-air encounter.  First, we specify each element of the semi net-form game model and describe how we compute a sample of the game theoretic equilibrium.  Secondly, we describe how to extend the game across time to simulate a complete encounter.  Then we present the results of a sensitivity analysis on the model and examine the potential benefits of a horizontal advisory system.  Finally, we conclude via a discussion of semi net-form game benefits in Section~\ref{sec:benefits} and concluding remarks in Section~\ref{sec:conclusion}.

\section{Semi Network-Form Games}
\label{sec:snfg}

Before we formally define the semi net-form game and various player strategies, we first define the notation used throughout the chapter.

\subsection{Notation}
\label{sec:notation}
Our notation is a combination of standard game theory notation and standard Bayes net notation.
The probabilistic simplex over a space $Z$ is written as $\Delta_Z$. Any Cartesian product $\times_{y \in Y} \Delta_Z$ is written as $\Delta_{Z \mid
Y}$. So $\Delta_{Z \mid Y}$ is the space of all possible conditional
distributions of $z \in Z$ conditioned on a value $y \in Y$. 

We indicate the size of any finite set $A$ as $|A|$. Given a function $g$ with domain $X$ and a subset $Y \subset X$, we write $g(Y)$ to mean the set $\{g(x) : x \in Y\}$. We couch the discussion in terms of countable spaces, but much of the discussion carries over to the uncountable case, e.g., by replacing Kronecker deltas $\delta_{a, b}$ with Dirac deltas $\delta(a - b)$. 

We use uppercase letters to indicate a random variable or its range, with the context making the choice clear.  We use lowercase letters to indicate a particular element of the associated random variable's range, i.e., a particular value of that random variable.  When used to indicate a particular player $i$, we will use the notation $-i$ to denote all players excluding player $i$.  We will also use primes to indicate sampled or dummy variables.

\subsection{Definition}
\label{sec:defSnfg}

A semi net-form game uses a Bayes net to serve as the underlying probabilistic framework, consequently representing all parts of the system using random variables.  Non-human components such as automation and physical systems are described using ``chance" nodes, while human components are described using ``decision" nodes.  Formally, chance nodes differ from decision nodes in that their conditional probability distributions are pre-specified.  Instead each decision node is associated with a utility function, which maps an instantiation of the net to a real number quantifying the player's utility.  To fully specify the Bayes net, it is necessary to determine the conditional distributions at the decision nodes to go with the distributions at the chance nodes.  We will discuss how to arrive at the players' conditional distributions (over possible actions), also called their ``strategies," later in Section~\ref{sec:LKRelaxed}.  We now formally define a semi network-form game as follows:

\begin{definition} 
\label{def:snfg}
An {\bf{($N$-player) semi network-form game}} is a quintuple $(G, X, u, R, \pi)$ where
\begin{enumerate}

\item $G$ is a finite directed acyclic graph $\{V, E\}$, where $V$ is the set of vertices and $E$ is the set of connecting edges of the graph. We write the set of parent nodes of any node $v \in V$ as $pa(v)$ and its successors as $succ(v)$.

\item $X$ is a Cartesian product of $|V|$ separate finite sets, each with at least two elements, with the set for element $v \in V$ written as $X_v$, and the Cartesian product of sets for all elements in $pa(v)$ written as $X_{pa(v)}$.

\item  $u$ is a function $X \rightarrow {\mathbb{R}}^N$. We will typically view it as a set of $N$ utility functions $u_i : X \rightarrow \mathbb{R}$.

\item $R$ is a partition of $V$ into $N+1$ subsets the first $N$ of which have exactly one element. The elements of $R(1)$ through $R(N)$ are called ``Decision" nodes, and the elements of $R(N+1)$ are ``Chance" nodes.  We write $\textsf{D} \equiv \cup_{i=1}^N R(i)$ and $\textsf{C} \equiv R(N+1)$.

\item $\pi$ is a function from
$v \in R(N+1) \rightarrow \Delta_{{X_v} \mid \times_{v' \in
pa(v)}{X_{v'}}}$.
(In other words, $\pi$ assigns to every $v \in
R(N+1)$ a conditional probability distribution of $v$
conditioned on the values of its parents.)
\end{enumerate}
\end{definition}

Intuitively, ${X_v}$ is the set of all possible states at node $v$,
$u_i$ is the utility function of player $i$, $R(i)$ is the decision node set by player $i$, and $\pi$ is the fixed set of
distributions at chance nodes. As an example, a normal-form game~\protect{\cite{MyersonBook}} is a semi net-form game in
which $E$ is empty. As another example, let $v$ be a decision node of 
player $i$ that has one parent, $v'$. Then the conditional distribution $P(X_{v'} \mid X_{pa(v')})$ is a generalization of an information set.  

A semi net-form game is a special case of a general network-form game \protect{\cite{WolpertNFG}}.  In particular, a semi net-form game allows each player to control only one decision node, whereas the full network-form game makes no such restrictions allowing a player to control multiple decision nodes in the net.  Branching (via ``branch nodes") is another feature not available in semi net-form games.  Like a net-form game, Multi-Agent Influence Diagrams \protect{\cite{Koller03}} also allow multiple nodes to be controlled by each player. Unlike a net-form game, however, they do not consider bounded rational agents, and have special utility nodes rather than utility functions.

\subsection{A Simple Semi Network-Form Game Example}
\label{sec:simpleNFGExample}

We illustrate the basic understandings of semi net-form games using the simple example shown in Figure~\ref{fig:simpleNFG}.  In this example, there are 6 random variables ($A, B, C, D, P_1, P_2$) represented as nodes in the net; the edges between nodes define the conditional dependence between random variables.  For example, the probability of $D$ depends on the values of $P_1$ and $P_2$, while the probability of $A$ does not depend on any other variables.  We distinguish between the two types of nodes: chance nodes ($A, B, C, D$), and decision nodes ($P_1, P_2$).  As discussed previously,  chance nodes differ from decision nodes in that their conditional probability distributions are specified a-priori.  Decision nodes do not have these distributions pre-specified, but rather what is pre-specified are the utility functions ($U_1$ and $U_2$) of those players.  Using their utility functions, their strategies $P(P_1 \mid B)$ and $P(P_2 \mid C)$ are computed to complete the Bayes net.  This computation requires the consideration of the Bayes net from each player's perspective.

\begin{figure}
\begin{center}
\includegraphics[width = \textwidth, trim = 0 80 80 80]{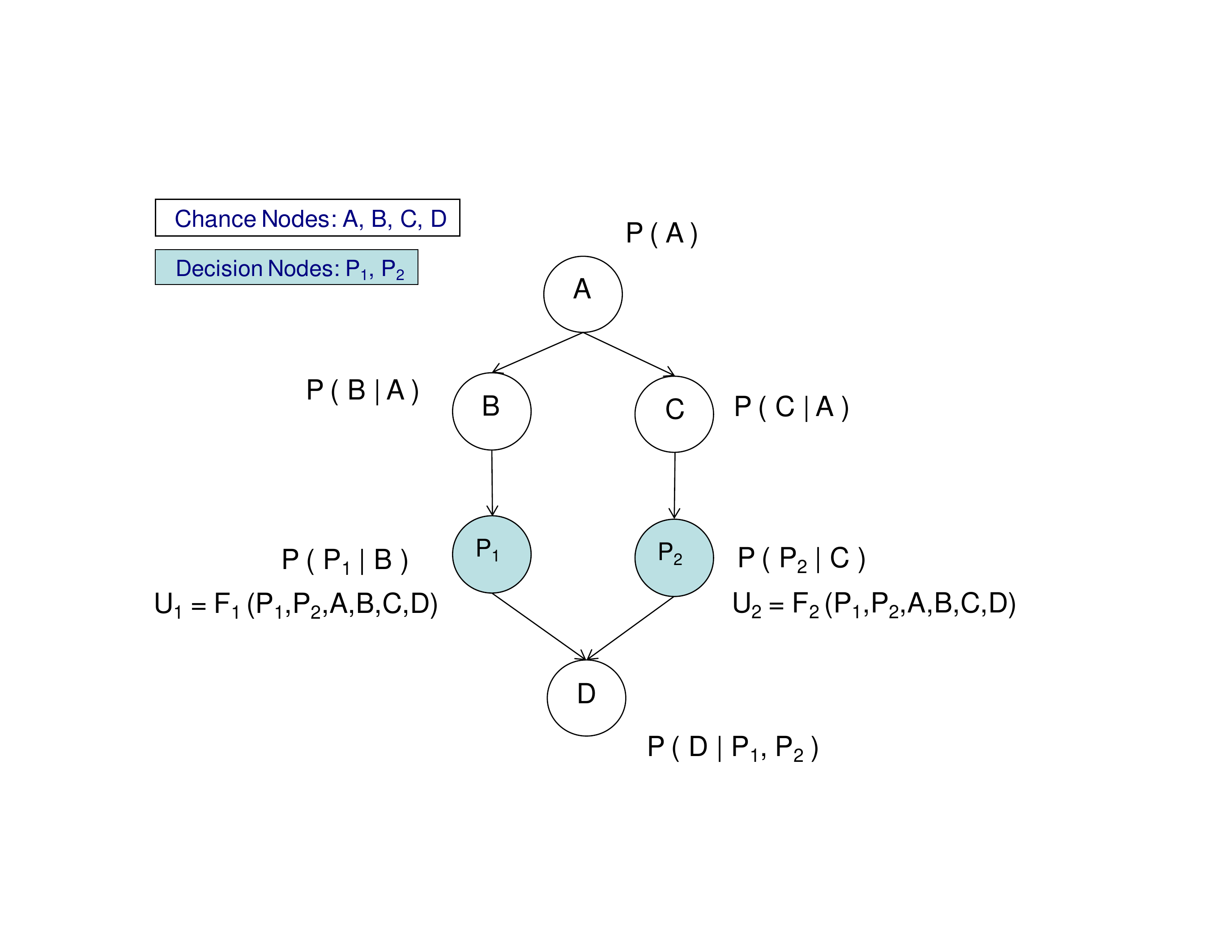}
\caption{A simple net-form game example:  Fixed conditional probabilities are specified for chance nodes ($A,B,C,D$), while utility functions are specified for decision nodes ($P_1, P_2$).  Players try to maximize their expected utility over the Bayes net.}
\label{fig:simpleNFG}       
\end{center}
\end{figure}

Figure~\ref{fig:simpleNFGPlayer} illustrates the Bayes net from $P_1$'s perspective.  In this view, there are nodes that are observed ($B$), there are nodes that are controlled ($P_1$), and there are nodes that do not fall into any of these categories ($A, C, P_2, D$), but appear in the player's utility function.  This arises from the fact that in general the player's utility function can be a function of any variable in the net.  As a result, in order to evaluate the expected value of his utility function for a particular candidate action (sometimes we will use the equivalent game theoretic term ``move"), $P_1$ must perform inference over these variables based on what he observes\footnote{We discuss the computational complexity of a particular equilibrium concept later in Section~\ref{sec:complexity}.}.  Finally, the player chooses the action that gives the highest expected utility.

\begin{figure}
\begin{center}
\includegraphics[width=\textwidth,trim = 30 100 30 80]{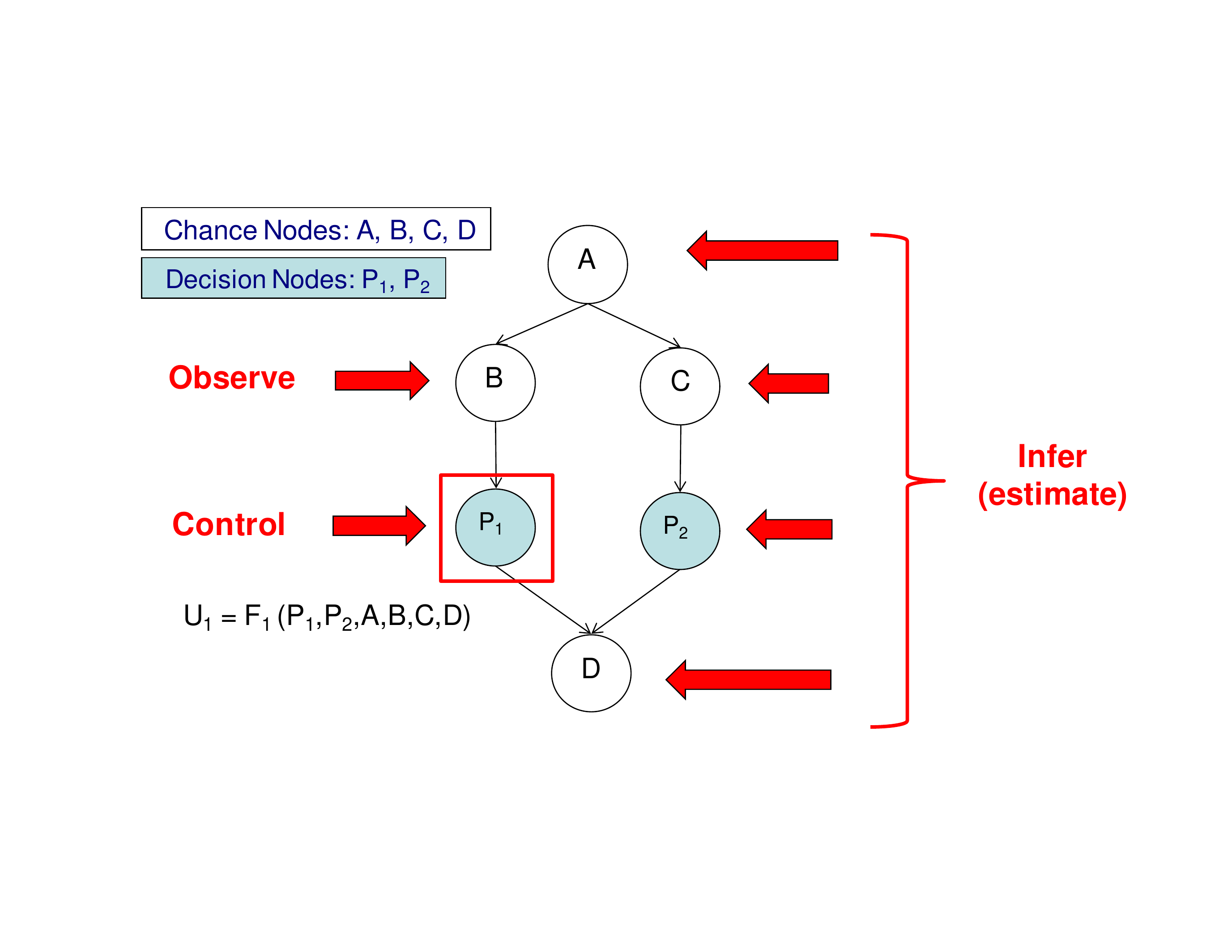}
\caption{A simple net-form example game from player 1's perspective:  Using information that he observes, the player infers over unobserved variables in the Bayes net in order to set the value of his decision node.}
\label{fig:simpleNFGPlayer}       
\end{center}
\end{figure}

\subsection{Level-K Thinking}
\label{sec:LK}

Level-K thinking is a game theoretic equilibrium concept used to predict the outcome of human-human interactions.  A number of studies \protect{\cite{Camerer10,Camerer89,CamererBook, CostaGomes09, Crawford07,Wright10}} have shown promising results predicting experimental data in games using this method.  The concept of level-K is defined recursively as follows. A level $K$ player plays (picks his action) as though all other players are playing at level $K-1$, who, in turn, play as though all other players are playing at level $K-2$, etc.  The process continues until level 0 is reached, where the player plays according to a prespecified prior distribution.  Notice that running this process for a player at $K \ge 2$ results in ricocheting between players.  For example, if player A is a level 2 player, he plays as though player B is a level 1 player, who in turn plays as though player A is a level 0 player.  Note that player B in this example may not be a level 1 player in reality -- only that player A assumes him to be during his reasoning process.  Since this ricocheting process between levels takes place entirely in the player's mind, no wall clock time is counted (we do not consider the time it takes for a human to run through his reasoning process).  We do not claim that humans actually think in this manner, but rather that this process serves as a good model for predicting the outcome of interactions at the aggregate level.  In most games, $K$ is a fairly low number for humans; experimental studies \protect{\cite{CamererBook}} have found $K$ to be somewhere between 1 and 2.

Although this work uses level-K exclusively, we are by no means wedded to this equilibrium concept.  In fact, semi net-form games can be adapted to use other models, such as Nash equilibrium, Quantal Response Equilibrium, Quantal Level-K, and Cognitive Hierarchy.  Studies \protect{\cite{CamererBook,Wright10}} have found that performance of an equilibrium concept varies a fair amount depending on the game.  Thus it may be wise to use different equilibrium concepts for different problems.

\subsection{Satisficing}
\label{sec:BoM}

Bounded rationality as coined by Simon \protect{\cite{Simon56}} stems from observing the limitations of humans during the decision-making process. That is, humans are limited by the information they have, cognitive limitations of their minds, and the finite amount of time they have to make decisions. The notion of satisficing \protect{\cite{Caplin10,Simon56,Simon82}} states that humans are unable to evaluate the probability of all outcomes with sufficient precision, and thus often make decisions based on adequacy rather than by finding the true optimum.  Because decision-makers lack the ability and resources to arrive at the optimal solution, they instead apply their reasoning only after having greatly simplified the choices available.

Studies have shown evidence of satisficing in human decision-making.  In recent experiments \protect{\cite{Caplin10}}, subjects were given a series of calculations (additions and subtractions), and were told that they will be given a monetary prize equal to the answer of the calculation that they choose. Although the calculations were not difficult in nature, they did take effort to perform.  The study found that most subjects did not exhaustively perform all the calculations, but instead stopped when a ``high enough" reward was found.  

\subsection{Level-K Relaxed Strategies}
\label{sec:LKRelaxed}

We use the notions of level-K thinking and satisficing to motivate a new game theoretic equilibrium concept called ``level-K relaxed strategies."  For a player $i$ to perform classic level-K reasoning~\protect{\cite{CamererBook}} requires $i$
to calculate best responses\footnote{We use the term best response in the game theoretic sense.  i.e. the player chooses the move with the highest expected utility.}. In turn, calculating best responses often involves calculating the Bayesian posterior probability of what information is available to the other players, $-i$, conditioned on the information available to $i$. That posterior is an integral, which typically cannot be evaluated in closed form.

In light of this, to use level-K reasoning, players must approximate
those Bayesian integrals. We hypothesize that real-world players do
this using Monte Carlo sampling.  Or more precisely, we hypothesize
that their behavior is consistent with their approximating the
integrals that way. 

More concretely, given a node $v$, to form their best-response, the associated player $i = R^{-1}(v)$ will want to calculate quantities of the form argmax$_{x_v} [\mathbb{E}(u_i \mid x_v, x_{pa(v)})]$, where $u_i$ is the player's utility, $x_v$ is the variable set by the player (i.e. his move), and $x_{pa(v)}$ is the realization of his parents that he observes.  We hypothesize that he (behaves as though he) approximates this calculation in several steps. First, $M$ candidate moves are chosen via IID sampling the player's satisficing distribution.  Now, for each candidate move, he must estimate the expected utility resulting from playing that move.  He does this by sampling the posterior probability distribution $P^K(X_V \mid x_v, x_{pa(v)})$ (which accounts for what he knows), and computing the sample expectation $\hat{u}^K_i$.  Finally, he picks the move that has the highest estimated expected utility.  Formally, we give the following definition:

\begin{definition}
\label{def:lk_symbols}
Consider a semi network-form game $(G,X,u,R,\pi)$ with level $K-1$ relaxed strategies\footnote{We will define level-K relaxed strategies in Definition~\ref{def:lk_1}.} $\Lambda^{K-1}(X_{v'} \mid X_{pa(v')})$ defined for all $v' \in \textsf{D}$ and $K \ge 1$.  For all nodes $v$ and sets of nodes $Z$ in such a semi net-form game, define 
\begin{enumerate}
\item
$U = V \setminus \{v, pa(v)\}$, 
\item
$P^K(X_{v} \mid X_{pa(v)}) = \pi(X_{v} \mid X_{pa(v)})$ if $v \in \textsf{C}$,
\item
$P^K(X_{v} \mid X_{pa(v)}) =  \Lambda^{K-1}(X_{v} \mid X_{pa(v)})$ if $v \in \textsf{D}$, and
\item
$P^K(X_Z) = \prod_{v'' \in Z} P^K(X_{v''} \mid X_{pa(v'')})$.
\end{enumerate}
\end{definition}

\begin{definition} Consider a semi network-form game
$(G, X, u, R, \pi)$. For all $v \in \textsf{D}$, specify an associated
\textbf{level 0} distribution $\Lambda^0(X_{v} \mid x_{pa(v)}) \in
\Delta_{{X_{v}} \mid \times_{v' \in pa(v)}{X_{v'}}}$ and an associated
\textbf{satisficing}  distribution $\lambda(X_v \mid x_{pa(v)}) \in
\Delta_{{X_v} \mid \times_{v' \in pa(v)}{X_{v'}}}$. Also specify
counting numbers $M$ and $M'$.

For any $K \ge 1$, the \textbf{level $K$ relaxed strategy} of
node $v \in \textsf{D}$
is the conditional distribution $\Lambda^K(X_v \mid x_{pa(v)}) \in
\Delta_{{X_v} \mid \times_{v' \in pa(v)}{X_{v'}}}$ sampled by 
running the following stochastic process independently for each $x_{pa(v)} \in X_{pa(v)}$:

\begin{enumerate}
\item Form a set $\{x'_v(j) : j = 1, \ldots, M\}$ of IID samples of $\lambda(X_v \mid x_{pa(v)})$ and then remove all duplicates.  Let $m$ be the resultant size of the set;

\item For $j = 1,\dots,m$, form a set $\{x'_V(k; x'_v(j)) : k =
1,\ldots M'\}$ of IID samples of the joint distribution
\begin{eqnarray*}
P^K(X_V \mid x'_v(j), x_{pa(v)}) &=& \prod_{v' \in V} P^K(X_{v'} \mid X_{pa(v')}) \delta_{X_{pa(v)},
x_{pa(v)}} \delta_{X_v, x'_v(j)};
\end{eqnarray*}
and compute
\begin{eqnarray*}
\hat{u}^K_i(x'_U(;x'_v(j)),x'_v(j),x_{pa(v)}) &=& \frac{1}{M'} \sum_{k = 1}^{M'} u_i(x'_V(k,x'_v(j)));
\end{eqnarray*}
where $x'_V(;x'_v(j))$ is shorthand for $\{x'_{v'}(k,x'_v(j)) : v' \in V, k = 1, \ldots, M'\}$

\item  Return $x'_v(j^*)$ where $j^* \equiv$ argmax$_{j} [\hat{u}^K_i(x'_U(;x'_v(j)),x'_v(j),x_{pa(v)})]$.
\end{enumerate}
\label{def:lk_1}
\end{definition}

Intuitively, the counting numbers $M$ and $M'$ can be interpreted as a measure of a player's rationality.  Take, for example, $M \rightarrow \infty$ and $M' \rightarrow \infty$.  Then the player's entire movespace would be considered as candidate moves, and the expected utility of each candidate move would be perfectly evaluated.  Under these circumstances, the player will always choose the best possible move, making him perfectly rational.  On the other hand if $M=1$, this results in the player choosing his move according to his satisficing distribution, corresponding to random behavior.

One of the strengths of Monte Carlo expectation estimation is that it is unbiased \protect{\cite{RobertBook}}. 
This property carries over to level-K relaxed strategies. More precisely,
consider a level $K$ relaxed player $i$, deciding
which of his moves $\{x'_v(j) :  j \in 1, \ldots, m\}$ to play
for the node $v$ he controls, given a particular set of values $x_{pa(v)}$ that he observes.
To do this he will compare the values $\hat{u}^K_i(x'_U(;x'_v(j)),x'_v(j),x_{pa(v)})$.
These values are all unbiased estimates of the associated conditional expected utility\footnote{Note that the true expected conditional utility is not defined without an associated complete Bayes net.  However, we show in Theorem~\ref{thm:relaxed} Proof that the expected conditional utility is actually independent of the probability $P_{\Gamma_i}(X_v \mid X_{pa(v)})$ and so it can chosen arbitrarily. We make the assumption that $P_{\Gamma_i}(x_v \mid x_{pa(v)}) \ne 0$ for mathematical formality to avoid dividing by zero in the proof.} evaluated under an equivalent Bayes Net $\Gamma_i$ defined in Theorem~\ref{thm:relaxed}.  Formally, we have the following:

\begin{theorem}
\label{thm:relaxed}
Consider a semi net-form game $(G, X, u, R, \pi)$ with associated satisficing $\lambda(X_{v} \mid x_{pa(v)})$ and level 0 distribution $\Lambda^0(X_{v} \mid x_{pa(v)})$ specified for all players.

Choose a particular player $i$ of that game, a particular level $K$, and a player move $x_v=x'_v(j)$ from Definition~\ref{def:lk_1} for some particular $j$.  Consider any values $x_{pa(v)}$ where $v$ is the node controlled by player $i$.  Define $\Gamma_i$ as any Bayes net whose directed acyclic graph is $G$, where for all nodes $v' \in \textsf{C}$, $P_{\Gamma_i}(X_{v'} \mid X_{pa(v')})=\pi(X_{v'} \mid X_{pa(v')})$, for all nodes $v'' \in \textsf{D}$, $P_{\Gamma_i}(X_{v''} \mid X_{pa(v'')})$, and where $P_{\Gamma_i}(X_v \mid X_{pa(v)})$ is arbitrary so long as $P_{\Gamma_i}(x_v \mid x_{pa(v)}) \ne 0$.  We also define the notation $P_{\Gamma_i}(X_Z)$ for a set Z of nodes to mean $\prod_{v' \in Z} P_{\Gamma_i}(X_{v'} \mid X_{pa(v')})$.

Then
the expected value $\mathbb{E}(\hat{u}^K_i \mid x'_v(j), x_{pa(v)})$ 
evaluated under the associated level-K relaxed strategy equals $\mathbb{E}(u_i \mid x_v, x_{pa(v)})$ evaluated under the Bayes net $\Gamma_i$.
\end{theorem}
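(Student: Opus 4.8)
The plan is to prove the identity in two moves: first I would reduce the left-hand side to an ordinary expectation of $u_i$ under the sampling distribution $P^K(\cdot \mid x'_v(j), x_{pa(v)})$, and then I would show that this sampling distribution coincides, as a distribution over $X_V$, with the conditional distribution that $\Gamma_i$ assigns to $X_V$ given $X_v = x'_v(j)$ and $X_{pa(v)} = x_{pa(v)}$.

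For the first move, fix $j$ and condition on $x_v = x'_v(j)$ and $x_{pa(v)}$. By Definition~\ref{def:lk_1}, once these are held fixed the only randomness left in $\hat u^K_i$ lives in the $M'$ draws $x'_V(k;x'_v(j))$, which are i.i.d.\ from $P^K(X_V \mid x'_v(j), x_{pa(v)})$; there is no argmax to contend with here because $j$ is not being optimized over. Linearity of expectation then gives
\[
\mathbb{E}\left(\hat u^K_i \mid x'_v(j), x_{pa(v)}\right)
= \frac{1}{M'}\sum_{k=1}^{M'}\mathbb{E}\big[u_i(x'_V(k;x'_v(j)))\big]
= \sum_{x_V} P^K(x_V \mid x'_v(j), x_{pa(v)})\, u_i(x_V),
\]
so the left-hand side is exactly $\mathbb{E}_{P^K(\cdot\mid x'_v(j),x_{pa(v)})}[u_i]$.

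For the second move, I would write both $P^K(X_V \mid x'_v(j), x_{pa(v)})$ and $P_{\Gamma_i}(X_V \mid X_v = x'_v(j), X_{pa(v)} = x_{pa(v)})$ as normalized product measures and compare them factor by factor. Each is proportional to $\big[\prod_{v'\in V}(\cdot)\big]\,\delta_{X_{pa(v)},x_{pa(v)}}\,\delta_{X_v,x'_v(j)}$, where on chance nodes the factor is $\pi$ in both cases and on decision nodes other than $v$ it is $\Lambda^{K-1}$ in both cases (by Definition~\ref{def:lk_symbols} for $P^K$, and by the construction of $\Gamma_i$). The one factor that differs is the one attached to node $v$ itself --- $P^K(X_v\mid X_{pa(v)})=\Lambda^{K-1}(X_v\mid X_{pa(v)})$ on one side, an arbitrary $P_{\Gamma_i}(X_v\mid X_{pa(v)})$ on the other --- but the conditioning event pins $X_v$ and all of $X_{pa(v)}$ simultaneously, so this factor is evaluated only at the single point $(x'_v(j),x_{pa(v)})$ and contributes the very same positive constant to every configuration in the support, hence cancels under normalization. (This is precisely where the hypothesis $P_{\Gamma_i}(x_v\mid x_{pa(v)})\ne 0$ is needed, so the cancellation is legitimate and no $0/0$ arises; it is also what makes the expected utility independent of the chosen CPD at $v$, as the footnote claims.) After the cancellation both distributions are proportional to the identical function $\big[\prod_{v'\ne v}P^K(X_{v'}\mid X_{pa(v')})\big]\,\delta_{X_{pa(v)},x_{pa(v)}}\,\delta_{X_v,x'_v(j)}$, so they are equal; plugging this into the displayed equation yields $\mathbb{E}(\hat u^K_i\mid x'_v(j),x_{pa(v)})=\mathbb{E}_{\Gamma_i}(u_i\mid x_v,x_{pa(v)})$.

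The main obstacle is getting the second move exactly right: one must read $P^K(X_V\mid x'_v(j),x_{pa(v)})$ as the genuine Bayesian posterior of the product joint $\prod_{v'\in V}P^K(X_{v'}\mid X_{pa(v')})$ given $X_v=x'_v(j)$ and $X_{pa(v)}=x_{pa(v)}$ --- the reading already signaled by the word ``posterior'' in the text preceding Definition~\ref{def:lk_1} --- rather than as ``clamp $v$ and $pa(v)$ and forward-sample the remaining nodes,'' since the latter would disagree with $\Gamma_i$ on the ancestors of $pa(v)$ (forward sampling does not propagate the observation of $x_{pa(v)}$ backwards, whereas conditioning does). Once that reading is fixed, the factor-by-factor comparison is routine, and the only remaining care is the positivity bookkeeping needed to justify dividing by the normalizers.
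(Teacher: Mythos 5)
Your proposal is correct and follows essentially the same route as the paper's proof: first use linearity of expectation over the $M'$ IID posterior samples to reduce the left-hand side to $\mathbb{E}_{P^K(\cdot\mid x'_v(j),x_{pa(v)})}[u_i]$, then observe that in the factored joint the only factor on which $P^K$ and $P_{\Gamma_i}$ differ is the one at node $v$, which is pinned to the constant $P(x'_v(j)\mid x_{pa(v)})$ by the conditioning and cancels against the normalizer (exactly where the paper cancels $P^K(x_v\mid x_{pa(v)})$ in its ratio of integrals, and exactly why $P_{\Gamma_i}(x_v\mid x_{pa(v)})\ne 0$ is assumed). Your remark that $P^K(X_V\mid x'_v(j),x_{pa(v)})$ must be read as the genuine normalized posterior, not clamped forward sampling, matches how the paper's proof treats that quantity.
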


\begin{proof}
Write
\begin{eqnarray*}
&&\mathbb{E}(\hat{u}^K_i \mid x'_v(j), x_{pa(v)}) \;= \\
&&\int dx'_V(;x'_v(j)) \; P(x'_V(;x'_v(j)) \mid  x'_v(j), x_{pa(v)}) \hat{u}^K_i(x'_V(;x'_v(j))) \\
&=& \int dx'_V(;x'_v(j)) \; P(x'_V(;x'_v(j)) \mid x'_v(j), x_{pa(v)}) \frac{1}{M'} \sum_{k = 1}^{M'} u_i(x'_V(k,x'_v(j))) \\
&=& \frac{1}{M'} \sum_{k = 1}^{M'} \int dx'_V(k,x'_v(j)) \; P^K(x'_V(k,x'_v(j)) \mid x'_v(j), x_{pa(v)}) u_i(x'_V(k,x'_v(j))) \\
&=& \frac{1}{M'} \sum_{k = 1}^{M'} \int dX_V \; P^K(X_V \mid x_v, x_{pa(v)}) u_i(X_U,x_v,x_{pa(v)}) \\
&=& \int dX_V \; P^K(X_V \mid x_v, x_{pa(v)}) u_i(X_U,x_v,x_{pa(v)}) \\
&=& \frac{\int dX_U \; P^K(X_U,x_v, x_{pa(v)}) u_i(X_U,x_v,x_{pa(v)})}{\int dX_U \; P^K(X_U, x_v, x_{pa(v)})} \\
&=& \frac{\int dX_U \; \prod_{v' \in U} P^K(X_{v'} \mid X_{pa(v')}) \prod_{v'' \in pa(v)} P^K(x_{v''} \mid X_{pa(v'')}) P^K(x_v \mid x_{pa(v)}) u_i(X_U,x_v,x_{pa(v)})}{\int dX_U \;  \prod_{z' \in U} P^K(X_{z'} \mid X_{pa(z')}) \prod_{z'' \in pa(v)} P^K(x_{z''} \mid X_{pa(z'')}) P^K(x_v \mid x_{pa(v)})}\\
&=& \frac{\int dX_U \; \prod_{v' \in U} P^K(X_{v'} \mid X_{pa(v')}) \prod_{v'' \in pa(v)} P^K(x_{v''} \mid X_{pa(v'')}) u_i(X_U,x_v,x_{pa(v)})}{\int dX_U \;  \prod_{z' \in U} P^K(X_{z'} \mid X_{pa(z')}) \prod_{z'' \in pa(v)} P^K(x_{z''} \mid X_{pa(z'')})}\\
&=& \frac{\int dX_U \; \prod_{v' \in U} P_{\Gamma_i}(X_{v'} \mid X_{pa(v')}) \prod_{v'' \in pa(v)} P_{\Gamma_i}(x_{v''} \mid X_{pa(v'')})u_i(X_U,x_v,x_{pa(v)})}{\int dX_U \;  \prod_{z' \in U} P_{\Gamma_i}(X_{z'} \mid X_{pa(z')}) \prod_{z'' \in pa(v)} P_{\Gamma_i}(x_{z''} \mid X_{pa(z'')})}\\
&=& \int dX_V \; P_{\Gamma_i}(X_V \mid x_v, x_{pa(v)}) u_i(X_U,x_v,x_{pa(v)}) \\
&=& \mathbb{E}(u_i \mid x_v, x_{pa(v)}) 
\qquad \qquad \qquad \qquad \qquad \qquad \qquad \qquad \qquad \qquad \qquad \qquad \qedsymbol
\end{eqnarray*}
\end{proof}

In other words, we can set $P^K(x_v \mid x_{pa(v)})$ arbitrarily (as long as it is nonzero) and still have the utility estimate evaluated under the associated level-K relaxed strategy be an unbiased estimate of the expected utility conditioned on $x_v$ and $x_{pa(v)}$ evaluated under $\Gamma_i$.  Unbiasness in level-K relaxed strategies is important because the player must rely on a limited number of samples to estimate the expected utility of each candidate move.  The difference of two unbiased estimates is itself unbiased, enabling the player to compare estimates of expected utility without bias.

\subsection{Level-K d-Relaxed Strategies}
\label{sec:LKDRelaxed}

A practical problem with relaxed strategies is that the number of
samples may grow very quickly with depth of the Bayes net.
The following example illustrates another problem:

\begin{example}
Consider a net form game with two simultaneously moving players,
Bob and Nature, both making $\mathbb{R}$-valued moves.  Bob's utility
function is given by the difference between his and Nature's move. 

So to determine his level 1 relaxed strategy, Bob chooses $M$ candidate moves by sampling his
satisficing distribution, and then Nature chooses $M'$ (``level 0'')
moves for each of those $M$ moves by Bob. In truth, one of Bob's $M$ candidate
moves, $x_{Bob}^*$, is dominant\footnote{We use the term dominant in the game theoretic sense.  i.e., the move $x_{Bob}^*$ gives Bob the highest expected utility no matter what move Nature makes.} over the other $M - 1$ candidate moves due to
the definition of the utility function. However since there are an
independent set of $M'$ samples of Nature for each of Bob's moves,
there is nonzero probability that Bob won't return $x_{Bob}^*$, i.e.,
his level 1 relaxed strategy has nonzero probability of returning some other move.
\label{ex:5}
\end{example}

As it turns out, a slight modification to the Monte Carlo process defining 
relaxed strategies results in Bob returning $x_{Bob}^*$ with probability $1$
in Example~\ref{ex:5} for many graphs $G$. 
This modification also reduces the explosion in the number of Monte Carlo samples required for computing the players' strategies.

This modified version of relaxed strategies works by setting aside
a set $Y$ of nodes which are statistically independent of the state
of $v$. Nodes in $Y$ do not have to be
resampled for each value $x_v$.  Formally, the set $Y$ will be defined using the 
dependence-separation (d-separation) property concerning the
groups of nodes in the graph $G$ that defines the semi net-form
game~\protect{\cite{KollerBook,Koller03,Pearl00}}. Accordingly, we call this
modification ``d-relaxed strategies."  
Indeed, by \emph{not} doing any
such resampling, we can exploit the ``common random
numbers" technique to improve the Monte Carlo estimates~\protect{\cite{RobertBook}}.  Loosely speaking, to choose the move with the highest estimate of expected utility requires one to compare all pairs of estimates and thus implicitly evaluate their differences.  Recall that the variance of a difference of two estimates is given by $Var(\chi-\upsilon) = Var(\chi) + Var(\upsilon) - 2Cov(\chi,\upsilon)$.  By using d-relaxed strategies, we expect the covariance $Cov(\chi,\upsilon)$ to be positive, reducing the overall variance in the choice of the best move.

\begin{definition}
\label{def:rejSamp_symbols}
Consider a semi network-form game $(G,X,u,R,\pi)$ with level $K-1$ d-relaxed strategies\footnote{We will define level-K d-relaxed strategies in Definition~\ref{def:rejectionSampling}.} $\bar{\Lambda}^{K-1}(X_{v'} \mid X_{pa(v')})$ defined for all $v' \in \textsf{D}$ and $K \ge 1$.  For all nodes $v$ and sets of nodes $Z$ in such a semi net-form game, 
define 
\begin{enumerate}
\item
$S^{v} = succ(v)$, 
\item
$S^{-v} = V \setminus \{v \cup S^{v}\}$, 
\item
$Y = V \setminus \{v \cup pa(v) \cup S^{v} \}$,
\item
$\bar{P}^K(X_{v} \mid X_{pa(v)}) = \pi(X_{v} \mid X_{pa(v)})$ if $v \in \textsf{C}$,
\item
$\bar{P}^K(X_{v} \mid X_{pa(v)}) =  \bar{\Lambda}^{K-1}(X_{v} \mid X_{pa(v)})$ if $v \in \textsf{D}$, and
\item
$\bar{P}^K(X_Z) = \prod_{v'' \in Z} \bar{P}^K(X_{v''} \mid X_{pa(v'')})$.
\end{enumerate}
\end{definition}

Note that $Y \cup pa(v) = S^{-v}$ and $v \cup S^{v} \cup S^{-v} = V$.  The motivation for these definitions comes from the fact that $Y$ is precisely the set of nodes that are d-separated from $v$ by $pa(v)$.  As a result, when the player who controls $v$ samples $X_v$ conditioned on the observed $x_{pa(v)}$, the resultant value $x_v$ is statistically independent of the values of all the nodes in $Y$. Therefore the same set of samples of the values of the nodes in $Y$ can be reused for each new sample of $X_v$. This kind of reuse can provide substantial computational savings in
the reasoning process of the player who controls $v$.  We now consider the modified sampling process noting that a level-K d-relaxed strategy is defined recursively in $K$, via the sampling of $\bar{P}^K$.  Note that in general, Definition~\ref{def:lk_1} and Definition~\ref{def:rejectionSampling} do not lead to the same player strategies (conditional distributions) as seen in Example~\ref{ex:5}.

\begin{definition} 
\label{def:rejectionSampling}
Consider a semi network-form game
$(G, X, u, R, \pi)$ with associated
level 0 distributions $\Lambda^0(X_{v} \mid x_{pa(v)})$ and
satisficing distributions $\lambda(X_v \mid x_{pa(v)})$. Also specify counting numbers $M$ and $M'$. 

For any $K \ge 1$, the \textbf{level $K$ d-relaxed strategy} of
node $v \in \textsf{D}$, where $v$ is controlled by player $i$, 
is the conditional distribution $\bar{\Lambda}^K(X_v \mid x_{pa(v)}) \in
\Delta_{{X_v} \mid \times_{v' \in pa(v)}{X_{v'}}}$ that is sampled by 
running the following stochastic process independently for each $x_{pa(v)} \in X_{pa(v)}$:
\begin{enumerate}
\item Form a set $\{x'_v(j) : j = 1, \ldots, M\}$ of IID samples of $\lambda(X_v \mid x_{pa(v)})$ and then remove all duplicates.  Let $m$ be the resultant size of the set;

\item  Form a set $\{x'_{S^{-v}}(k) : k = 1,\ldots, M'\}$ of IID samples of the distribution over $X_{S^{-v}}$ given by
\begin{eqnarray*}
\bar{P}^K(X_{S^{-v}} \mid x_{pa(v)}) &=&  \prod_{v' \in {S^{-v}} } \bar{P}^K(X_{v'} \mid X_{pa(v')}) \delta_{X_{pa(v)},x_{pa(v)}};
\end{eqnarray*}

\item For $j = 1,\dots,m$, form a set $\{x'_{S^v}(k, x'_v(j)) : k =
1,\ldots, M'\}$ of IID samples of the distribution over $X_{S^v}$
given by
\begin{eqnarray*}
\bar{P}^K(X_{S^v} \mid x'_Y(;),x'_v(j), x_{pa(v)}) &=& \prod_{v' \in S^v} \bar{P}^K(X_{v'} \mid X_{pa(v')}) \prod_{v'' \in S^{-v}} 
\delta_{X_{v''}, x'_{v''}(k)} 
\delta_{X_{v}, x'_{v}(j)};
\end{eqnarray*}

and compute
\begin{eqnarray*}
&& \bar{u}^K_i(x'_Y(;),x'_{S^v}(;x'_v(j)),x'_v(j),x_{pa(v)}) \; = \\
&& \qquad \qquad \qquad \qquad \qquad \qquad \frac{1}{M'} \sum_{k = 1}^{M'} u_i(x'_Y(k),x'_{S^v}(k,x'_v(j)),x'_v(j),x_{pa(v)});
\end{eqnarray*}
where $x'_{Y}(;)$ is shorthand for $\{x'_v(k) : v \in Y, k = 1, \ldots, M'\}$
and $x'_{S^v}(;x'_v(j))$ is shorthand for $\{x'_{S^v}(k, x'_v(j)) : k = 1,\ldots, M'\}$.

\item  Return $x'_v(j^*)$ where $j^* \equiv$ argmax$_{j} [\bar{u}^K_i(x'_Y(;),x'_{S^v}(;x'_v(j)),x'_v(j),x_{pa(v)})]$.
\end{enumerate}

\end{definition}

Definition~\ref{def:rejectionSampling} requires directly sampling from a conditional probability, which requires rejection sampling.  This is highly inefficient if $pa(v)$ has low probability, and actually impossible if $pa(v)$ is continuous.  For these computational considerations, we introduce a variation of the previous algorithm based on likelihood-weighted sampling rather than rejection sampling.  Although the procedure, as we shall see in Definition~\ref{def:unnormLwSampling}, is only able to estimate the player's expected utility up to a proportionality constant (due to the use of likelihood-weighted sampling \protect{\cite{KollerBook}}), we point out that this is sufficient since proportionality is all that is required to choose between candidate moves.  Note that un-normalized likelihood-weighted level-K d-relaxed strategy, like level-K d-relaxed strategy, is defined recursively in $K$.

\begin{definition}
Consider a semi network-form game $(G,X,u,R,\pi)$ with unnormalized likelihood-weighted level $K-1$ d-relaxed strategies\footnote{We will define unnormalized likelihood-weighted level-K d-relaxed strategies in Definition~\ref{def:unnormLwSampling}.} $\tilde{\Lambda}^{K-1}(X_{v'} \mid X_{pa(v')})$ defined for all $v' \in \textsf{D}$ and $K \ge 1$.  For all nodes $v$ and sets of nodes $Z$ in such a semi net-form game, 
define 
\begin{enumerate}
\item
$\tilde{P}^K(X_{v} \mid X_{pa(v)}) = \pi(X_{v} \mid X_{pa(v)})$ if $v \in \textsf{C}$,
\item
$\tilde{P}^K(X_{v} \mid X_{pa(v)}) =  \tilde{\Lambda}^{K-1}(X_{v} \mid X_{pa(v)})$ if $v \in \textsf{D}$, and
\item
$\tilde{P}^K(X_Z) = \prod_{v'' \in Z} \tilde{P}^K(X_{v''} \mid X_{pa(v'')})$.
\end{enumerate}
\end{definition}

\begin{definition}
\label{def:unnormLwSampling}
Consider a semi network-form game
$(G, X, u, R, \pi)$ with associated
level 0 distributions $\Lambda^0(X_{v} \mid x_{pa(v)})$ and
satisficing distributions $\lambda(X_v \mid x_{pa(v)})$. Also specify counting numbers $M$ and $M'$, and recall the meaning of set $Y$ from Definition~\ref{def:rejSamp_symbols}.

For any $K \ge 1$, the \textbf{un-normalized likelihood-weighted level $K$ d-relaxed strategy} of
node $v$, where node $v$ is controlled by player $i$, 
is the conditional distribution $\tilde{\Lambda}^K(X_v \mid x_{pa(v)}) \in
\Delta_{{X_v} \mid \times_{v' \in pa(v)}{X_{v'}}}$ that is sampled by 
running the following stochastic process independently for each $x_{pa(v)} \in X_{pa(v)}$:
\begin{enumerate}
\item Form a set $\{x'_v(j) : j = 1, \ldots, M\}$ of IID samples of $\lambda(X_v \mid x_{pa(v)})$, and then remove all duplicates.  Let $m$ be the resultant size of the set;

\item  Form a set of weight-sample pairs $\{(w'(k),x'_{S^{-v}}(k)) : k =
1,\ldots M'\}$ by setting $x'_{pa(v)} = x_{pa(v)}$, IID sampling the distribution over $X_Y$ given by
\begin{eqnarray*}
\tilde{P}^K(X_Y) &=&  \prod_{v' \in Y } \tilde{P}^K(X_{v'} \mid X_{pa(v')})
\end{eqnarray*}
and setting
\begin{eqnarray*}
w'(k) &=& \prod_{v' \in pa(v)} \tilde{P}^K(x_{v'} \mid x'_{pa(v')}(k));
\end{eqnarray*}

\item For $j = 1,\dots,m$, form a set $\{x'_{S^v}(k, x'_v(j)) : k =
1,\ldots M'\}$ of IID samples of the distribution over $X_{S^v}$
given by
\begin{eqnarray*}
\tilde{P}^K(X_{S^v} \mid x'_Y(;),x'_v(j), x_{pa(v)}) &=& \prod_{v' \in S^v} \tilde{P}^K(X_{v'} \mid X_{pa(v')}) \prod_{v'' \in S^{-v}} 
\delta_{X_{v''}, x'_{v''}(k)} 
\delta_{X_{v}, x'_{v}(j)};
\end{eqnarray*}

and compute
\begin{eqnarray*}
&&\tilde{u}_i(x'_Y(;),x'_{S^v}(;x'_v(j)),x'_v(j),x_{pa(v)}) \; = \\ 
&& \qquad \qquad \qquad \qquad \frac{1}{M'} \sum_{k = 1}^{M'} w'(k) u_i(x'_Y(k),x'_{S^v}(k,x'_v(j)),x'_v(j),x_{pa(v)});
\end{eqnarray*}

\item  Return $x'_v(j^*)$ where $j^* \equiv$ argmax$_{j} [\tilde{u}_i(x'_Y(k),x'_{S^v}(k,x'_v(j)),x'_v(j),x_{pa(v)})]$.
\end{enumerate}

\end{definition}


\subsubsection{Computational Complexity}
\label{sec:complexity}
Let $N$ be the number of players.  Intuitively, as each level $K$ player samples  the Bayes net from their perspective, they initiate samples by all other players at level $K-1$.  These players, in turn, initiate samples by all other players at level $K-2$, continuing until level 1 is reached (since level 0 players do not sample the Bayes net).

As an example, Figure~\ref{fig:complexity} enumerates the number of Bayes net samples required to perform level-K d-relaxed sampling for $N=3$ where all players reason at $K=3$.  Each square represents performing the Bayes net sampling process once.  As shown in the figure, the sampling process of $P_A$ at level 3 initiates sampling processes in the two other players, $P_B$ and $P_C$, at level 2.  This cascading effect continues until level 1 is reached, and is repeated from the top for $P_B$ and $P_C$ at level 3.  In general, when all players play at the same level $K$, this may be conceptualized as having $N$ trees of degree $N-1$ and depth $K$; therefore having a computational complexity proportional to $\sum_{j=0}^{K-1}(N-1)^j N$, or $O(N^K)$.  In other words, the computational complexity is polynomial in the number of players and exponential in the number of levels.  Fortunately, experiments \protect{\cite{CamererBook, CostaGomes09}} have found $K$ to be small in human reasoning.  

\begin{figure}
\begin{center}
\includegraphics[width=\textwidth,trim = 0 180 10 180]{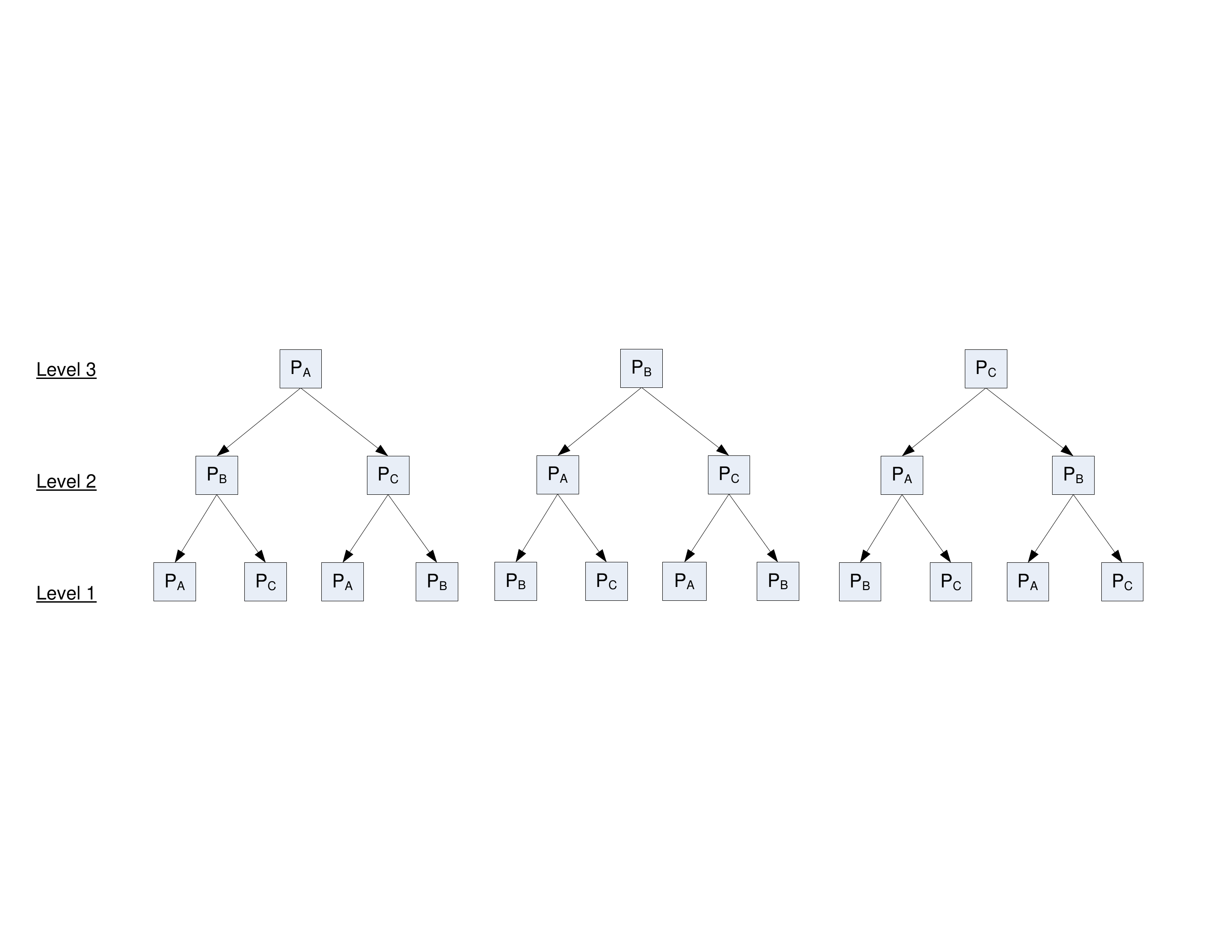}
\caption{Computational complexity of level-K d-relaxed strategies with $N=3$ and $K=3$:  Each box represents a single execution of the algorithm.  The computational complexity is found to be $O(N^K)$.}
\label{fig:complexity}       
\end{center}
\end{figure}

\section{Using Semi Net-Form Games to Model Mid-Air Encounters}
\label{sec:encounters}

\ac{tcas} is an aircraft collision avoidance system currently mandated by the International Civil Aviation Organization to be fitted to all aircraft with a maximum take-off mass of over 5700 kg (12,586 lbs) or authorized to carry more than 19 passengers.  It is an onboard system designed to operate independently of ground-based air traffic management systems to serve as the last layer of safety in the prevention of mid-air collisions.  \ac{tcas} continuously monitors the airspace around an aircraft and warns pilots of nearby traffic.  If a potential threat is detected, the system will issue a \ac{ra}, i.e., recommended escape maneuver, to the pilot.  The \ac{ra} is presented to the pilot in both a visual and audible form.  Depending on the aircraft, visual cues are typically implemented on either an instantaneous vertical speed indicator, a vertical speed tape that is part of a primary flight display, or using pitch cues displayed on the primary flight display.  Audibly, commands such as ``Climb, Climb!" or ``Descend, Descend!" are heard.

If both (own and intruder) aircraft are \ac{tcas}-equipped, the issued \acp{ra} are coordinated, i.e., the system will recommend different directions to the two aircraft.  This is accomplished via the exchange of ``intents" (coordination messages).  However, not all aircraft in the airspace are \ac{tcas}-equipped, i.e., general aviation.  Those that are not equipped cannot issue \acp{ra}.

While \ac{tcas} has performed satisfactorily in the past, there are many limitations to the current \ac{tcas} system.  First, since \ac{tcas} input data is very noisy in the horizontal direction, issued \acp{ra} are in the vertical direction only, greatly limiting the solution space.  Secondly, \ac{tcas} is composed of many complex deterministic rules, rendering it difficult for authorities responsible for the maintenance of the system (i.e., Federal Aviation Administration) to understand, maintain, and upgrade.  Thirdly, \ac{tcas} assumes straight-line aircraft trajectories and does not take into account flight plan information.  This leads to a high false-positive rate, especially in the context of closely-spaced parallel approaches.

This work focuses on addressing one major weakness of \ac{tcas}: the design assumption of a deterministic pilot model.  Specifically, \ac{tcas} assumes that a pilot receiving an \ac{ra} will delay for 5 seconds, and then accelerate at 1/4 g to execute the \ac{ra} maneuver precisely.  Although pilots are trained to obey in this manner, a recent study of the Boston area \protect{\cite{Kuchar07}} has found that only 13\% of \acp{ra} are obeyed -- the aircraft response maneuver met the \ac{tcas} design assumptions in vertical speed and promptness.  In 64\% of the cases, pilots were in partial compliance -- the aircraft moved in the correct direction, but did not move as promptly or as aggressively as instructed.  Shockingly, the study found that in 23\% of \acp{ra}, the pilots actually responded by maneuvering the aircraft in the opposite direction of that recommended by \ac{tcas} (a number of these cases of non-compliance may be attributed to visual flight rules\footnote{Visual flight rules are a set of regulations set forth by the Federal Aviation Administration which allow a pilot to operate an aircraft relying on visual observations (rather than cockpit instruments).}).  As air traffic density is expected to double in the next 30 years~\protect{\cite{FAA10}}, the safety risks of using such a system will increase dramatically.

Pilot interviews have offered many insights toward understanding these statistics.  The main problem is a mismatch between the pilot model used to design the \ac{tcas} system and the behavior exhibited by real human pilots. During a mid-air encounter, the pilot does not blindly execute the \ac{ra} maneuver.  Instead, he combines the \ac{ra} with other sources of information (i.e., instrument panel, visual observation) to judge his best course of action.  In doing this, he quantifies the quality of a course of action in terms of a utility function, or degree of happiness, defined over possible results of that course of action. That utility function does not only involve proximity to the other aircraft in the encounter, but also involves how drastic a maneuver the pilot makes. For example, if the pilot believes that a collision is unlikely based on his observations, he may opt to ignore the alarm and continue on his current course, thereby avoiding any loss of utility incurred by maneuvering. This is why a pilot will rationally decide to ignore alarms with a high probability of being false.

When designing TCAS, a high false alarm rate need not be bad in and of itself. Rather what is bad is a high false alarm rate combined with a pilot's utility function to result in pilot behavior which is undesirable at the system level. This more nuanced perspective allows far more powerful and flexible design of alarm systems than simply worrying about the false positive rate.
Here, this perspective is elaborated. We use a semi net-form game for predicting the behavior of a system comprising automation and humans who are motivated by utility functions and anticipation of one another's behavior.

Recall the definition of a semi net-form game via a quintuple ($G,X,u,R,\pi$) in Definition~\ref{def:snfg}.  We begin by specifying each component of this quintuple.  To increase readability, sometimes we will use (and mix) the equivalent notation $Z=X_Z$, $z=x_Z$, and $z'=x'_Z$ for a node $Z$ throughout the \ac{tcas} modeling.

\subsection{Directed Acyclic Graph $G$}
\label{sec:dagG}

The directed acyclic graph $G$ for a 2-aircraft encounter is shown in Figure~\ref{fig:BnetDiagram}.  At any time $t$, the true system state of the mid-air encounter is represented by the world state $S$, which includes the states of all aircraft.  Since the pilots (the players in this model) and \ac{tcas} hardware are not able to observe the world state perfectly, a layer of nodes is introduced to model observational noise and incomplete information.  The variable $W_i$ represents pilot $i$'s observation of the world state, while $W_{TCAS_i}$ represents the observations of \ac{tcas} $i$'s sensors.  A simplified model of the current \ac{tcas} logic is then applied to $W_{TCAS_i}$ to emulate an \ac{ra} $T_i$.  Each pilot uses his own observations $W_i$ and $T_i$ to choose an aircraft maneuver command $A_i$.  Finally, we produce the outcome $H$ by simulating the aircraft states forward in time using a model of aircraft kinematics, and calculate the social welfare $F$.  We will describe the details of these variables in the following sections.

\begin{figure}
\begin{center}
\includegraphics[width=\textwidth,trim = 0 120 0 120]{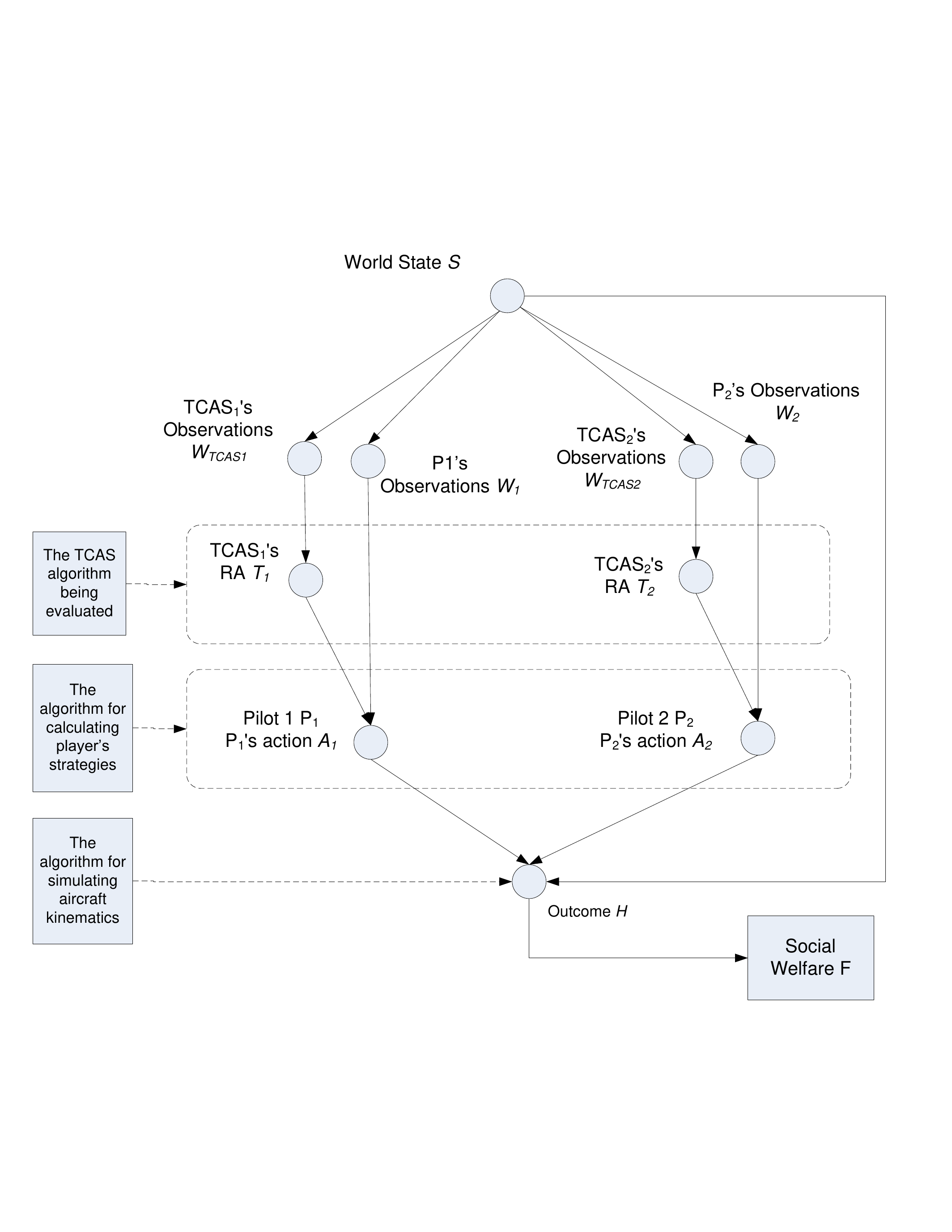}
\caption{Bayes net diagram of a 2-aircraft mid-air encounter:  Each pilot chooses a vertical rate to maximize his expected utility based on his \ac{tcas} alert and a noisy partial observation of the world state.}
\label{fig:BnetDiagram}       
\end{center}
\end{figure}

\subsection{Variable Spaces $X$}
\label{sec:SetX}

\subsubsection{Space of World State $S$}
\label{sec:nodeSx}

The world state $S$ contains all the states used to define the mid-air encounter environment.  It includes 10 states per aircraft to represent kinematics and pilot commands (see Table~\ref{tab:acKinematics}) and 2 states per aircraft to indicate \ac{tcas} intent.  Recall that \ac{tcas} has coordination functionality, where it broadcasts its intents to other aircraft to avoid issuing conflicting \acp{ra}.  The TCAS intent variables are used to remember whether an aircraft has previously issued an \ac{ra}, and if so, what was the sense (direction).

\begin{table}
\caption{A description of aircraft kinematic states and pilot inputs.}
\label{tab:acKinematics}       
\begin{tabular}{p{2.0cm}p{2.0cm}p{7cm}}
\hline\noalign{\smallskip}
Variable & Units & Description  \\
\noalign{\smallskip}\svhline\noalign{\smallskip}
$x$ & ft & Aircraft position in x direction\\
$y$ & ft & Aircraft position in y direction\\
$z$ & ft & Aircraft position in z direction\\
$\theta$ & rad & Heading angle\\
$\dot{\theta}$ & rad/s & Heading angle rate\\
$\dot{z}$ & ft/s & Aircraft vertical speed\\
$f$ & ft/s & Aircraft forward speed\\
$\phi_c$ & rad & Commanded aircraft roll angle\\
$\dot{z}_c$ & ft/s & Commanded aircraft vertical speed\\
$f_c$ & ft/s & Commanded aircraft forward speed\\
\noalign{\smallskip}\hline\noalign{\smallskip}
\end{tabular}
\end{table}

\subsubsection{Space of TCAS Observation $W_{TCAS_i}$}
\label{sec:nodeWtcasX}

Being a physical system, \ac{tcas} does not have full and direct access to the world state.  Rather, it must rely on noisy partial observations of the world to make its decisions.  $W_{TCAS_i}$ captures these observational imperfections, modeling \ac{tcas} sensor noise and limitations.  Note that each aircraft has its own \ac{tcas} hardware and makes its own observations of the world.  Consequently, observations are made from a particular aircraft's perspective.  To be precise, we denote $W_{TCAS_i}$ to represent the observations that TCAS $i$ makes of the world state, where TCAS $i$ is the \ac{tcas} system on board aircraft $i$.  Table~\ref{tab:Wtcas} describes each variable in $W_{TCAS_i}$.  Variables are real-valued (or positively real-valued where the negative values do not have physical meaning).

\begin{table}
\caption{A description of $W_{TCAS_i}$ variables.}
\label{tab:Wtcas}       
\begin{tabular}{p{2.0cm}p{2.0cm}p{7cm}}
\hline\noalign{\smallskip}
Variable & Unit & Description  \\
\noalign{\smallskip}\svhline\noalign{\smallskip}
$r_h$ & ft & Horizontal range between own and intruding aircraft\\
$\dot{r}_h$ & ft/s & Horizontal range rate\\
$\dot{h}$ & ft/s & Relative vertical rate between own and intruding aircraft\\
$h$ & ft & Relative altitude between own and intruding aircraft\\
$h_i$ & ft & Own aircraft's altitude\\
\noalign{\smallskip}\hline\noalign{\smallskip}
\end{tabular}
\end{table}

\subsubsection{Space of \ac{tcas} \ac{ra} $T_i$}
\label{sec:nodeTx}

A simplified version of \ac{tcas}, called mini TCAS, is implemented based on \protect{\cite{KochenderferATC360}} with minor modifications (we will discuss the differences in Section~\ref{sec:nodeTpi}).  Mini TCAS issues an \ac{ra} $T_i$ based on $W_{TCAS_i}$ input, emulating the \ac{tcas} logic.  The variable $T_i$ represents the recommended target vertical rate issued to pilot $i$.  We model $T_i$ as able to take on one of 6 possible values: no \ac{ra} issued, descend at 42 ft/s, descend at 25 ft/s, level-off, climb at 25 ft/s, or climb at 42 ft/s. i.e., $T_i \in \{\emptyset, -42, -25, 0, 25, 42\}$, where $T_i = \emptyset$ indicates no \ac{ra} issued, otherwise $T_i$ is specified in ft/s.

\subsubsection{Space of Pilot's Observation $W_i$}
\label{sec:nodeWx}

Aside from the \ac{ra}, pilots have other sources of information about the world, such as those coming from instruments and visual observations.  All paths of information are considered when the pilot decides his best course of action.  Unfortunately, instruments and visuals also provide noisy partial observations of the world state.

Properly speaking there are many intricacies that should be considered in the pilot observation model.  First, the model should reflect the type and amount of information available via the cockpit instruments.  Secondly, the model should reasonably approximate the visual observation characteristics and its limitations, such as field of view and geometry.  For example, visual accuracy should decrease as distance increases, and moreover, visual observations of the intruding aircraft cannot be acquired altogether if the intruding aircraft is situated behind own aircraft.  Lastly, the model should consider a pilot attention model, since pilots may miss detecting an intruding aircraft if they are preoccupied with other tasks.  Attention and situational awareness are large topics of research in psychology and human factors especially under the context of pilots and military personnel \protect{\cite{Endsley88, Endsley89,Taylor90, Watts04}}.

As a first step, we do not consider the above subtleties, and begin with a very crude model for the pilot's observations. In particular, we model the pilot's observation $W_i$ as being a corrupted version of $S$.

\subsubsection{Space of Pilot's Move $A_i$ and Outcome $H$}
\label{sec:nodeAiHx}

At his decision point, pilot $i$ observes his parent nodes and takes an action (i.e, sets the value of node $A_i$).  The variable $A_i$ is the target vertical rate for aircraft $i$ represented by a real-valued number between -50 and 50 ft/s.  We take the outcome $H$ as being in the same space as $S$.  We will later see how this facilitates the simulation of the encounter.

\subsection{Utility Function $u$}
\label{sec:UtilFcn}

The pilot's utility function summarizes in a real number the degree of happiness for a given joint outcome.  It is a simple parameterization of the player that characterizes him and summarizes his preferences.  Players act to maximize his expected utility.  In this modeling, we assume all pilots have the same utility function.

Properly speaking, the utility function should be learned from data to be as realistic as possible.  However, since the task of learning parameters from data is a significant research topic of its own, it is being pursued as a separate effort.  For now, the utility functions are crafted using intuition gained from pilot interviews.  The authors found that pilots considered primarily 3 priorities when deciding how to react to a \ac{tcas} \ac{ra}.

\begin{enumerate}
\item \textbf{Avoid collision.}  Pilots will do all that is necessary to avoid a collision, thus this has the highest priority.  Since the fear of a collision increases as the aircraft get closer, a representative metric for measuring this impetus for collision avoidance is the minimum approach distance between aircraft $d_{min}$ (i.e., the smallest separation distance between two aircraft over the entire encounter).
\item \textbf{Course deviation.}  There are many reasons why a pilot does not want to deviate from his current course.  For example, deviations are often associated with longer flight times, higher fuel consumption, and increased flying effort.  The notion is that if a collision is deemed unlikely (i.e., there's a high chance of \ac{tcas} being a false positive), then the pilot will be inclined to stay on his current course.  We reflect this inclination by penalizing the difference between the current vertical speed and the vertical speed in consideration.
\item \textbf{Obeying \ac{tcas}.} Pilots have indicated that when they feel uncertain that they will be inclined to follow \ac{tcas}.  In other words, given all else equal, pilots have a natural tendency to follow \acp{ra}.  This may be attributed to their training, their inclination to follow orders, or even blind trust in the system.  We summarize this tendency into a metric by penalizing moves that deviate from the issued \ac{ra}.
\end{enumerate}

In summary, utility function is chosen to be of the following form:
\begin{eqnarray}
u_i = \alpha_1 \log(\delta + d_{min}) - \alpha_2 |\dot{z} - a_i| - \alpha_3 |T_i - a_i| \nonumber
\end{eqnarray}
\noindent where $\alpha_1$, $\alpha_2$, and $\alpha_3$ are real positive constant weights, $\delta$ is a small positive constant, $a_i$ is the pilot's action, $d_{min}$ is the minimum approach distance between the aircraft, and $\dot{z}$ is the aircraft's current vertical speed.  The weights reflect how the pilot trades off the three competing objectives.  The weight $\alpha_1$ is largest, since avoiding collision is highest priority; $\alpha_2$ is the second largest, followed by $\alpha_3$ with the smallest weight.  The log function in the first term is used to capture the fact that the rate of utility increase/decrease is much faster when the aircraft are close together than when they are far apart.

\subsection{Partition $R$}
\label{sec:PartitionR}

We partition the variables in the net into chance and decision nodes as follows:  The nodes in the set $\{S, W_{TCAS_1}, W_{TCAS_2}, T_1, T_2, W_1, W_2,H\}$ are chance nodes, and the nodes $A_1$ and $A_2$ are decision nodes.  Moreover, player 1 sets the value of the node $A_1$, and player 2 sets the value of the node $A_2$.

\subsection{Set of Conditional Probabilities $\pi$}
\label{sec:SetPi}

In this section, we describe the conditional probability distribution associated with each chance node.  Note the use of stochastic terminology such as ``sample" and ``conditional probability distribution" for both stochastic and deterministic nodes.  This is in light that we may view a deterministic node as stochastic with all its probability mass on its deterministic result.

\subsubsection{CPD of the World State $S$}
\label{sec:nodeSpi}

At the beginning of an encounter, the world state is initialized using the encounter generator (to be discussed in Section~\ref{sec:encGenerator}).  Otherwise, the outcome $H$ at time $t-\Delta t$ becomes the world state $S$ at time $t$.

\subsubsection{CPD of TCAS Observation $W_{TCAS_i}$}
\label{sec:nodeWtcasPi}

To calculate $W_{TCAS_i}$, the exact versions of the variables in Table~\ref{tab:Wtcas} are first calculated from the world state $S$ using the following equations:
\begin{eqnarray}
r_h &=& \sqrt{(x_j-x_i)^2 + (y_j-y_i)^2} \nonumber \\
\dot{r}_h &=& \frac{1}{r_h} \cdot ((x_j-x_i)(f_j \cos\theta_j-f_i \cos\theta_1)+(y_j-y_i)(f_j \sin\theta_j- f_i\sin\theta_i)) \nonumber \\
\dot{h} \; &=& \dot{z}_j - \dot{z}_i \nonumber \\
h \; &=& z_j - z_i \nonumber \\
h_i &=& z_i \nonumber
\end{eqnarray}
\noindent where the subscripts $i$ and $j$ indicate own and intruding aircraft, respectively.  We then add zero-mean Gaussian noise\footnote{We assume independent noise for each variable $r_h,\dot{r}_h,\dot{h},h,h_i$ with standard deviations of $100,50,4,10,10$, respectively.  These variables are described in Table~\ref{tab:Wtcas}.} to the variables to emulate sensor noise.

\subsubsection{CPD of \ac{tcas} \ac{ra} $T_i$}
\label{sec:nodeTpi}
Recall from Section~\ref{sec:nodeTx} that we use mini TCAS to emulate the full \ac{tcas} logic.  The major assumptions of mini TCAS are:

\begin{enumerate}
\item All aircraft are \ac{tcas}-equipped and coordinate \acp{ra}.
\item Actual horizontal range rate is used\footnote{In \protect{\cite{KochenderferATC360}}, the horizontal range rate is fixed to -500 ft/s.}.
\item No tracking or encounter monitoring over time is performed.  Hence, mini TCAS is a memory-less system.
\item No \ac{tcas} strengthenings or reversals (updates or revisions to the initial \ac{tcas}).
\item The tau-rising test and horizontal miss distance test are not performed \protect{\cite{KochenderferATC360}}.
\end{enumerate}

The implementation of mini TCAS in this work follows closely that described in \protect{\cite{KochenderferATC360}}.  First, the range, altitude, and altitude separation tests are performed for collision detection.  If no potential collision is detected, no \ac{ra} is issued.  If a potential collision is detected, the algorithm then continues to determine the sense (direction) and strength of the \ac{ra}.  In the sense selection process, the algorithm determines which direction (ascend, descend, or level-off) gives the greatest vertical miss distance.  However, to account for \ac{tcas} coordination, a modification to the algorithm is made.  To avoid issuing conflicting \acp{ra}, intruders' senses (up, level, or down) that appear in received intent messages are first removed from the list of candidate senses for own aircraft.  The direction is chosen from the remaining choices.  Strength selection follows to choose the least disruptive \ac{ra} that still achieves the minimum safety distance.

It is known that pilots react differently to a revised (second) \ac{ra} than the initial one.  Especially in cases of the \acp{ra} contradicting one another, the pilots may experience cognitive dissonance, and even go into a confused mental state.  As a result, to model this phenomenon properly would require a whole new level of pilot modeling, with perhaps separate models for the first and second \acp{ra}. One possible hack is to use the same model for both \acp{ra}.  However, doing so would yield misleading results, since the pilot would experience no ``mental conflict" to go against the previous \ac{ra}, and thus is much more likely to comply to any \ac{ra} change. Alternatively, social welfare $F$ could be hacked to demerit reversals or strenghtenings to \acp{ra}. For now though, we discard any encounters that issue reversals or strenghtenings.

\subsubsection{CPD of Pilot's Observation $W_i$ and Outcome $H$}
\label{sec:nodeWpi}

We model the pilot's observation $W_i$ as being $S$ corrupted with additive zero-mean Gaussian noise\footnote{We assume independent noise for each variable $x,y,z,\theta,\dot{\theta},\dot{z},f$ with standard deviations of $100,100,20,0.05,0,5,10$, respectively.  These variables are described in Table~\ref{tab:acKinematics}.}.  The outcome $H$ is calculated using the world state $S$, the pilot actions $A_1, A_2$, and the aircraft kinematics described in Section~\ref{sec:acKinematics}. 

\subsection{Computing Level-K d-Relaxed Strategies}
\label{sec:computeStrategies}
Using the semi net-form game ($G,X,u,R,\pi$) specified previously, this section describes the application of a modified version of Definition~\ref{def:unnormLwSampling} to calculate player $i$'s strategies.  Table~\ref{tab:LkParameters} specifies the additional parameters needed by the algorithm.  To be realistic, these model parameters should be learned from real data.  However, for now, they are chosen by hand.  For convenience, we define the new variable $WT_{i'}$, where $i'$ is a dummy player index, to be the combination of the nodes $W_{i'}$, $W_{TCAS_{i'}}$, and $T_{i'}$.  
Let $v$ be the node controlled by player $i$.  Then, applying Definition~\ref{def:rejSamp_symbols}, we see that $v=A_i$, $pa(v) = \{W_i,T_i\}$, $S^v=\{H\}$, $S^{-v}=\{S,WT_i,WT_{-i},A_{-i}\}$, and $Y=\{S,W_{TCAS_i},WT_{-i},A_{-i}\}$.

\begin{table}
\caption{Specification of parameters in unnormalized likelihood-weighted level-K d-relaxed strategies (Definition~\ref{def:unnormLwSampling}) for the collision avoidance problem.}
\label{tab:LkParameters}       
\begin{tabular}{p{2.0cm}p{3.5cm}p{5.5cm}}
\hline\noalign{\smallskip}
Parameter & Value & Description  \\
\noalign{\smallskip}\svhline\noalign{\smallskip}
$K$ & $2$ & Player level for all pilots\\
$M$ & $5$ & Number of samples of pilot's own movespace\\
$M'$ & $10$ & Number of samples of the pilot's environment\\
$\lambda(A_i \mid w_i, t_i)$ & Uniform over movespace & Satisficing distribution of player $i$\\
$\Lambda^0(A_i \mid w_i,t_i)$ & Wide Gaussian ($\sigma=20$) about \ac{ra} & Level 0 distribution of player $i$\\
\noalign{\smallskip}\hline\noalign{\smallskip}
\end{tabular}
\end{table}

We proceed following the steps of Definition~\ref{def:unnormLwSampling}.  In step 1, we form a set $\{a'_i(j) : j = 1, \ldots, M\}$ by IID sampling $\lambda(A_i \mid w_i, t_i)$ M times.  Since the space of $A_i$ is continuous, we do not need to worry about removing duplicates.  

The application of step 2 requires a slight modification.
Recall that \ac{tcas} logic is deterministic, causing its probability $\tilde{P}^K(t_i \mid w'_{TCAS_i})$ where $t_i$ is the observed realization of $T_i$, to be either 1 or 0.  This creates a natural filtering effect that zeroes out entire posterior probabilities in the sum according to whether the scenarios cause the observed (evidence) \ac{ra} to occur.  In fact, since the space of the world state $S$ is so large, we found the number of unusable samples to be impractically high.  This rendered the straightforward application of Definition~\ref{def:unnormLwSampling} infeasible.

To help direct the samples toward the relevant subspace, we introduce importance sampling to propose nodes $S$ and $W_{TCAS_i}$ using their values sampled at the top-level $s$ and $w_{TCAS_i}$ respectively.  Note that the player does \emph{not} have access to $w_{TCAS_i}$ or $s$ -- rather the simulator does. We use these variables to form the proposal distribution for approximating the expectation.  More concretely, we form a set of weight-sample pairs $\{(w'(k),x'_{S^{-v}}(k)) : k=1,\ldots M'\}$ by setting $w'_i = w_i$ and $t'_i = t_i$, and instead of sampling from $\tilde{P}^K(X_Y) = \prod_{v' \in Y}\tilde{P}^K(X_{v'} \mid X_{pa(v')})$ as described in step 2 of Definition~\ref{def:unnormLwSampling}, we IID sample from: 
\begin{eqnarray}
Q(X_Y) = \prod_{v' \in Y \setminus \{S,W_{TCAS_i}\}}\tilde{P}^K(X_{v'} \mid X_{pa(v')}) Q(S \mid s) Q(W_{TCAS_i} \mid w_{TCAS_i}) \nonumber
\end{eqnarray}
and adjust the weighting factor accordingly by multiplying $w'(k)$ by: 
\begin{eqnarray}
\frac{\tilde{P}^K(s'(k))}{Q(s'(k) \mid s)} \frac{\tilde{P}^K(w'_{TCAS_i}(k) \mid s'(k))}{Q(w'_{TCAS_i}(k) \mid w_{TCAS_i})} \nonumber
\end{eqnarray}
One can verify that this manipulation does not change the expected value~\protect{\cite{RobertBook}}.  Recall that $S$ is composed of two parts: one that contains the kinematic states of the aircraft, and the other that represents the \ac{tcas} intent messages.  We denote these variables as $S_K$ and $S_I$, respectively, and choose to propose them separately, i.e., $Q(S \mid s) = Q(S_K \mid s_K) Q(S_I \mid s_I)$.  We choose $Q(S_K \mid s_K)$ to be a tight Gaussian distribution\footnote{We assume independent noise for each variable $x,y,z,\theta,\dot{\theta},\dot{z},f$ with standard deviations of $5,5,2,0.01,0,1,5$, respectively.  These variables are described in Table~\ref{tab:acKinematics}.} centered about $s_K$, and choose $Q(S_I \mid s_I)$ to be a delta function about the true value $s_I$ with probability $q$, or one of the following 4 values each with probability $\frac{1}{4}(1-q)$: \\
1. No intent received. \\
2. Intent received with an up sense. \\
3. Intent received with a level-off sense. \\
4. Intent received with a down sense. \\
We choose $Q(W_{TCAS_i} \mid w_{TCAS_i})$ to be a tight Gaussian distribution\footnote{We assume independent noise for each variable $r_h,\dot{r}_h,\dot{h},h,h_i$ with standard deviations of $5,2,2,2,2$, respectively.  These variables are described in Table~\ref{tab:Wtcas}.} centered about $w_{TCAS_i}$.

The trick, as always with importance sampling Monte Carlo, is to choose a proposal distribution that will result in low variance, and that is nonzero at all values where the integrand is nonzero~\protect{\cite{RobertBook}}.  In this case, so long as the proposal distribution over $s'$ has full support, the second condition is met.  So the remaining issue is how much variance there will be. Since $Q(W_{TCAS_i} \mid w_{TCAS_i})$ is a tight Gaussian by the choice of proposal distribution, values of $w'_{TCAS_i}$ will be very close to values of $w_{TCAS_i}$, causing $P(t_i \mid w'_{TCAS_i})$ to be much more likely to equal 1 than 0.  To reduce the variance even further, rather than form $M'$ samples of the distribution, samples of the proposal distribution are generated until $M'$ of them have nonzero posterior.

We continue at step 3.  For each candidate action $a'_i(j)$, we estimate its expected utility by sampling the outcome $H$ from $\tilde{P}^K(H \mid x'_Y(;),a'_i(j),w_i,t_i)$, and computing the estimate $\tilde{u}_i^K$.  The weighting factor compensates for our use of a proposal distribution to sample variables rather than sampling them from their natural distributions.  Finally, in step 4, we choose the move $a'_i(j^*)$ that has the highest expected utility estimate.

\subsection{Encounter Simulation}
\label{sec:sysFlow}

Up until now, we have presented a game which describes a particular instant $t$ in time.  In order to simulate an encounter to any degree of realism, we must consider how this game evolves with time.

\subsubsection{Time Extension of the Bayes Net}
\label{sec:timeExtension}

Note that the timing of decisions\footnote{We are referring to the time at which the player makes his decision, not the amount of time it takes for him to decide.  Recall that level-K reasoning occurs only in the mind of the decision-maker and thus does not require any wall clock time.} is in reality stochastic as well as asynchronous.  However, to consider a stochastic and asynchronous timing model would greatly increase the model's complexity.  For example, the pilot would need to average over the other pilots' decision timing and vice versa.  As a first step, we choose a much simpler timing model and make several simplifying assumptions.  First, each pilot only gets to choose a single move, and he does so when he receives his initial \ac{ra}.  This move is maintained for the remainder of the encounter.  Secondly, each pilot decides his move by playing a simultaneous move game with the other pilots (the game described by ($G,X,u,R,\pi$)).  These assumptions effectively remove the timing stochasticity from the model.

The choice of modeling as a simultaneous move game is an approximation, as it precludes the possibility of the player anticipating the timing of players' moves.  Formally speaking, this would introduce an extra dimension in the level-K thinking, where the player would need to sample not only players' moves, but also the timing of such a move for all time in the past and future.  However, it is noted that since the players are not able to observe each other's move directly (due to delays in pilot and aircraft response), it does not make a difference to him whether it was made in the past or simultaneously. This makes it reasonable to model the game as simultaneous move at the time of decision.  The subtlety here is that the player's thinking should account for when his opponent made his move via imagining what his opponent would have seen at the time of decision.  However, in this case, since our time horizons are short, this is a reasonable approximation.

Figure~\ref{fig:timeExtBnetDiagram} shows a revised Bayes net diagram -- this time showing the extension to multiple time steps.  Quantities indicated by hatching in the figure are passed between time steps.  There are two types of variables to be passed.  First, we have the aircraft states.  Second, recall that \ac{tcas} intents are broadcasted to other aircraft as a coordination mechanism. These intents must also be passed on to influence future \acp{ra}.

\begin{figure}
\begin{center}
\includegraphics[width=\textwidth,trim = 0 80 0 80]{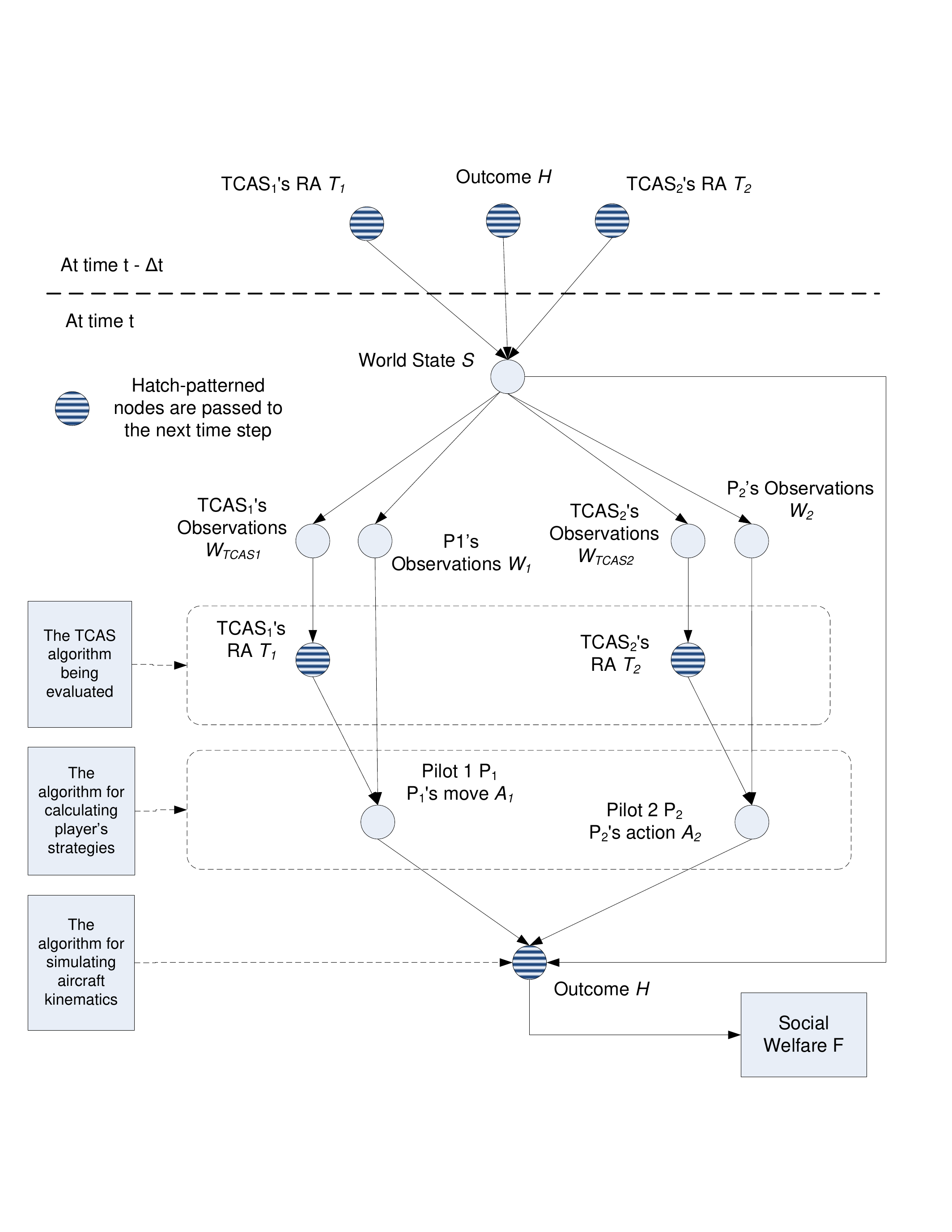}
\caption{Time-extended Bayes net diagram of a 2-aircraft mid-air encounter:  We use a simple timing model that allows each pilot to make a single move at the time he receives his \ac{tcas} alert.}
\label{fig:timeExtBnetDiagram}       
\end{center}
\end{figure}

\subsubsection{Simulation Flow Control}
\label{sec:flowControl}

Using the time-extended Bayes net as the basis for an inner loop, we add flow control to manage the simulation.  Figure~\ref{fig:sysFlowDiagram} shows a flow diagram for the simulation of a single encounter.  An encounter begins by randomly initializing a world state from the encounter generator (to be discussed in Section~\ref{sec:encGenerator}).  From here, the main loop begins.

First, the observational ($W_i$ and $W_{TCAS_i}$) and \ac{tcas} ($T_i$) nodes are sampled.  If a new \ac{ra} is issued, the pilot receiving the new \ac{ra} is allowed to choose a new move via a modified level-K d-relaxed strategy (described in Section~\ref{sec:computeStrategies}).  Otherwise, the pilots maintain their current path.  Note that in our model, a pilot may only make a move when he receives a new \ac{ra}.  Since \ac{tcas} strengthenings and reversals (i.e., updates or revisions to the initial \ac{ra}) are not modeled, this implies that each pilot makes a maximum of one move per encounter.  Given the world state and pilot commands, the aircraft states are simulated forward one time step, and social welfare (to be discussed in Section~\ref{sec:socialWelfare}) is calculated.  If a \acl{nmac} (NMAC) is detected (defined as having two aircraft separated less than 100 ft vertically and 500 ft horizontally) then the encounter ends in collision and a social welfare value of zero is assigned.  If an \ac{nmac} did not occur, successful resolution conditions (all aircraft have successfully passed each other) are checked.  On successful resolution, the encounter ends without collision and the minimum approach distance $d_{min}$ is returned.  If neither of the end conditions are met, the encounter continues at the top of the loop by sampling observational and \ac{tcas} nodes at the following time step.  

\begin{figure}
\begin{center}
\includegraphics[width=\textwidth,trim = 0 50 0 50]{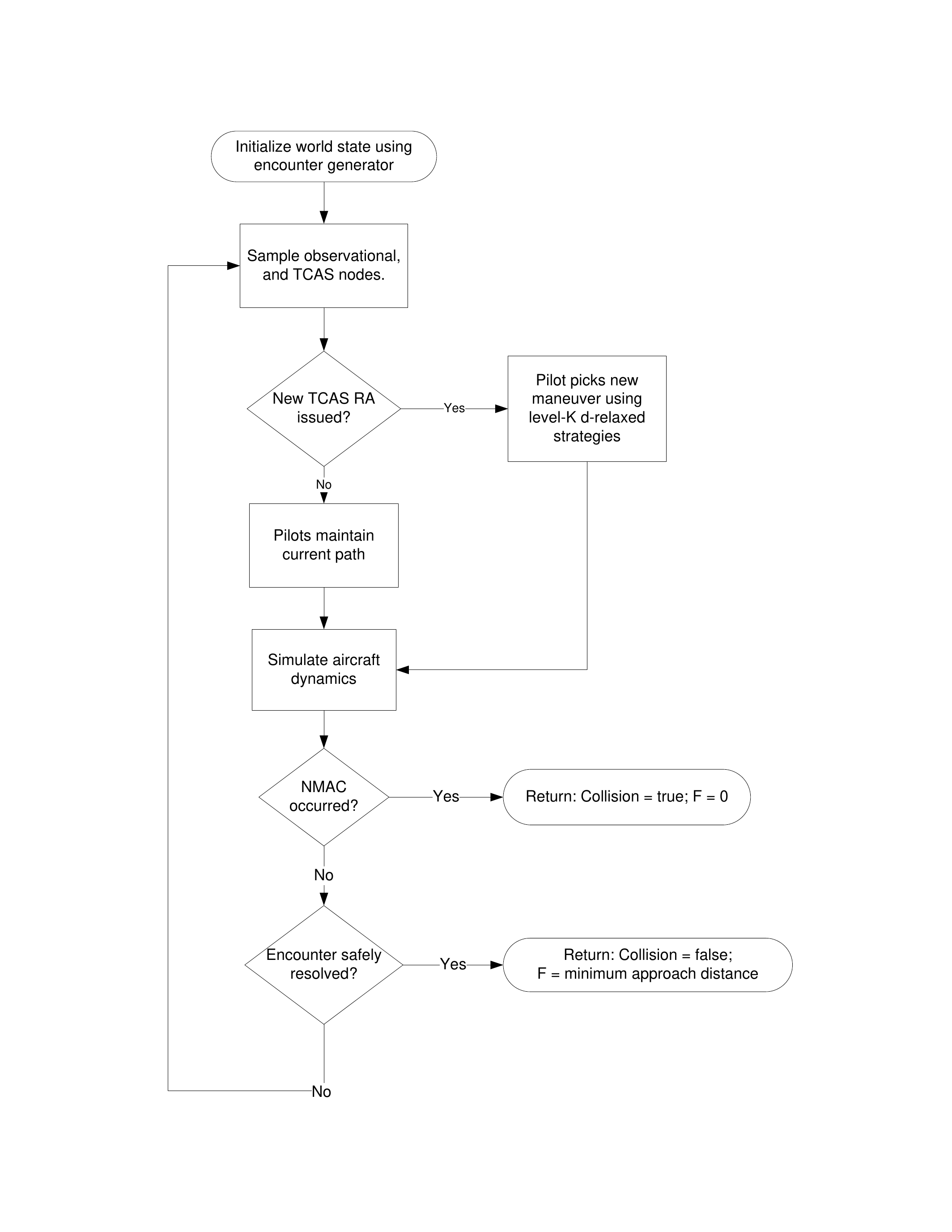}
\caption{Flow diagram of the encounter simulation process: We initialize the encounter using an encounter generator, and simulate forward in time using pilot commands and aircraft kinematics.  The encounter ends when the aircraft have either collided or have successfully passed each other.}
\label{fig:sysFlowDiagram}       
\end{center}
\end{figure}

\subsubsection{Encounter Generation}
\label{sec:encGenerator}

The purpose of the encounter generator is to randomly initialize the world states in a manner that is genuinely representative of reality.  For example, the encounters generated should be of realistic encounter geometries and configurations.  One way to approach this would be to use real data, and moreover, devise a method to generate fictitious encounters based on learning from real ones, such as that described in \protect{\cite{KochenderferATC344,KochenderferATC345}}.  For now, the random geometric initialization described in \protect{\cite{KochenderferItaly11}} Section~6.1 is used\footnote{The one modification is that $t_{target}$ (the initial time to collision between aircraft) is generated randomly from a uniform distribution between 40 s and 60 s rather than fixed at 40 s.}.

\subsubsection{Aircraft Kinematics Model}
\label{sec:acKinematics}

Since aircraft kinematic simulation is performed at the innermost step, its implementation has an utmost impact on the overall system performance.  To address computational considerations, a simplified aircraft kinematics model is used in place of full aircraft dynamics.  We justify these first-order kinematics in 2 ways:  First, we note that high-frequency modes are unlikely to have a high impact at the time scales ($\sim 1$ min.) that we are dealing with in this modeling.  Secondly, modern flight control systems operating on most (especially commercial) aircraft provide a fair amount of damping of high-frequency modes as well as provide a high degree of abstraction.  We make the following assumptions in our model:
\begin{enumerate}
\item \textbf{Only kinematics are modeled.} Aerodynamics are not modeled.  The assumption is that modern flight control systems abstract this from the pilot.
\item \textbf{No wind.} Wind is not considered in this model.
\item \textbf{No sideslip.} This model assumes that the aircraft velocity vector is always fully-aligned with its heading.
\item \textbf{Pitch angle is abstracted.} Pitch angle is ignored.  Instead, the pilot directly controls vertical rate.
\item \textbf{Roll angle is abstracted.} Roll angle is ignored.  Instead, the pilot directly controls heading rate.
\end{enumerate}

Figure~\ref{fig:acKinFunctional} shows the functional diagram of the kinematics model.  The input commands are first applied as inputs to first-order linear filters to update $\dot{\theta}$, $\dot{z}$, and $f$, these quantities are then used in the kinematic calculations to update the position and heading of the aircraft at the next time step.  Intuitively, the filters provide the appropriate time response (transient) characteristics for the system, while the kinematic calculations approximate the effects of the input commands on the aircraft's position and heading. 

\begin{figure}
\begin{center}
\includegraphics[width=\textwidth,trim = 30 60 0 60]{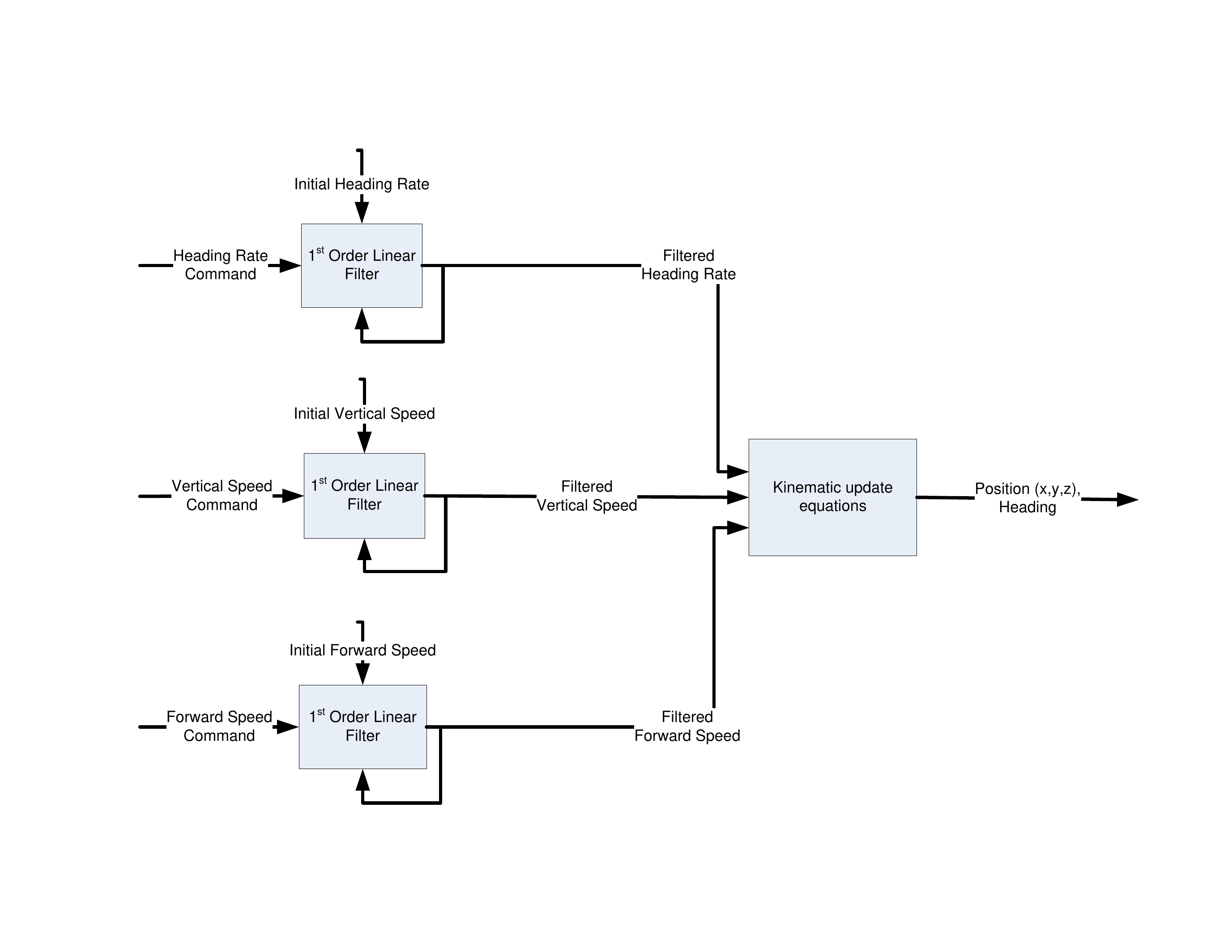}
\caption{Aircraft kinematics model functional diagram: Pilot commands are passed to filters to model aircraft transient response to first order.  Then aircraft kinematic equations based on forward Euler integration are applied.}
\label{fig:acKinFunctional}       
\end{center}
\end{figure}

The kinematic update equations, based on forward Euler integration method, are given by:
\begin{eqnarray}
\theta_{t+\Delta t} &=& \theta_t + \Delta t \cdot \dot{\theta}_t \nonumber \\
x_{t+\Delta t} &=& x_t + \Delta t \cdot f_t \cdot \cos\theta_t \nonumber \\
y_{t+\Delta t} &=& y_t  + \Delta t \cdot f_t \cdot \sin\theta_t \nonumber \\
z_{t+\Delta t} &=& z_t  + \Delta t \cdot \dot{z}_t \nonumber
\end{eqnarray}

Recall that a first-order filter requires two parameters: an initial value and a time constant.  We set the filter's initial value to the pilot commands at the start of the encounter, thus starting the filter at steady-state.  The filter time constants are chosen by hand (using the best judgment of the designers) to approximate the behavior of mid-size commercial jets.  Refinement of these parameters is the subject of future work.

\subsubsection{Modeling Details Regarding the Pilot's Move $A_i$}
\label{sec:nodeAi}

Recall that a pilot only gets to make a move when he receives a new \ac{ra}.  In fact, since strenghtenings and reversals are not modeled, the pilot will begin the scenario with a vertical speed, and get at most one chance to change it.  At his decision point, the pilots engage in a simultaneous move game (described in Section~\ref{sec:computeStrategies}) to choose an aircraft escape maneuver.  To model pilot reaction time, a 5-second delay is inserted between the time the player chooses his move, and when the aircraft maneuver is actually performed.

\subsection{Social Welfare $F$}
\label{sec:socialWelfare}

Social welfare function is a function specified a-priori that maps an instantiation of the Bayes net variables to a real number $F$.  As a player's degree of happiness is summarized by his utility, social welfare is used to quantify the degree of happiness for the system as a whole.  Consequently, this is the system-level metric that the system designer or operator seeks to maximize.  As there are no restrictions on how to set the social utility function, it is up to the system designer to decide the system objectives.  In practice, regulatory bodies, such as Federal Aviation Administration, or International Civil Aviation Organization, will likely be interested in defining the social welfare function in terms of a balance of safety and performance metrics.  For now, social welfare is chosen to be the minimum approach distance $d_{min}$.  In other words, the system is interested in aircraft separation.

\subsection{Example Encounter}
\label{sec:exampleEnc}

To look at average behavior, one would execute a large number of encounters to collect statistics on $F$.  To gain a deeper understanding of encounters, however, we examine encounters individually.  Figure~\ref{fig:vertTraj} shows 10 samples of the outcome distribution for an example encounter.  Obviously, only a single outcome occurs in reality, but the trajectory spreads provide an insightful visualization of the distribution of outcomes.  In this example, we can see (by visual inspection) that a mid-air collision is unlikely to occur in this encounter.  Furthermore, we see that probabilistic predictions by semi net-form game modeling provide a much more informative picture than the deterministic predicted trajectory that the \ac{tcas} model assumes (shown by the thicker trajectory).

\begin{figure}
\begin{center}
\includegraphics[width=\textwidth]{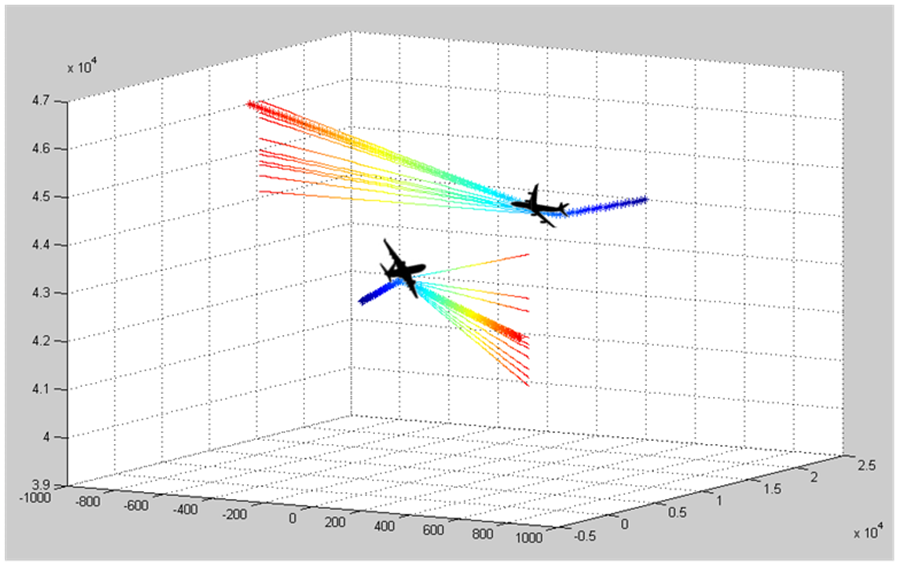}
\caption{Predicted trajectories sampled from the outcome distribution of an example encounter: Each aircraft proceeds on a straight-line trajectory until the pilot receives an \ac{ra}.  At that point, the pilot uses level-K d-relaxed strategies to decide what vertical rate to execute. The resultant trajectories from 10 samples of the vertical rate are shown. The trajectory assumed by \ac{tcas} is shown as the thicker trajectory.}
\label{fig:vertTraj}       
\end{center}
\end{figure}

\subsection{Sensitivity Analysis}
\label{sec:tradeAnalysis}

Because of its sampling nature, level-K relaxed strategy and its variants are all well-suited for use with Monte Carlo techniques. In particular, such techniques can be used to assess performance of the overall system by measuring social welfare $F$ (as defined in Section~\ref{sec:socialWelfare}).  Observing how $F$ changes while varying parameters of the system can provide invaluable insights about a system.  To demonstrate the power of this capability, parameter studies were performed on the mid-air encounter model, and sample results are shown in Figures~\ref{fig:ObsVsG}-\ref{fig:LKBoMvsG}.  In each case, we observe expected social welfare while selected parameters are varied.  Each datapoint represents the average of 1800 encounters.

In Figure~\ref{fig:ObsVsG}, the parameters $M_w$ and $M_{W_{TCAS}}$, which are multiples on the noise standard deviations of $W$ and $W_{TCAS}$ respectively, are plotted versus social welfare $F$.  It can be seen that as the pilot and \ac{tcas} system's observations get noisier (e.g. due to fog or faulty sensors), social welfare decreases.  This agrees with our intuition.  A noteworthy observation is that social welfare decreases faster with $M_w$ (i.e., when the pilot has a poor visual observation of his environment) than with $M_{W_{TCAS}}$ (i.e., noisy TCAS sensors). This would be especially relevant to, for example, a funder who is allocating resources for developing less noisy \ac{tcas} sensors versus more advanced panel displays for better situational awareness.

\begin{figure}
\begin{center}
\includegraphics[width=\textwidth]{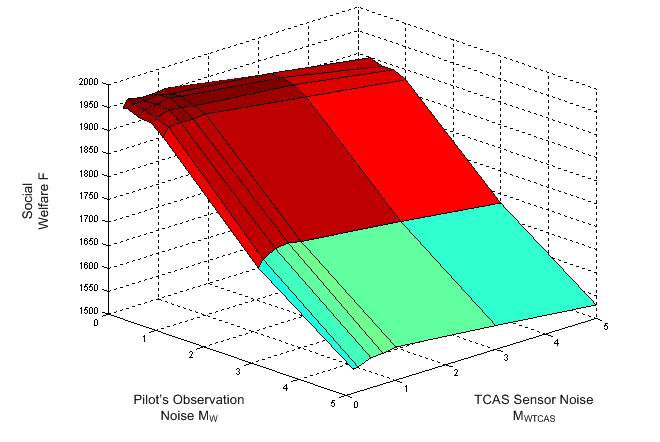}
\caption{Impacts of observational noise on social welfare:  Social welfare is plotted against multiples on the noise standard deviations of $W$ and $W_{TCAS}$.  We observe that social welfare decreases much faster with increase in $M_{W}$ than with increase in $M_{W_{TCAS}}$.  This means that according to our model, pilots receive more information from their general observations of the world state than from the \ac{tcas} \ac{ra}.}
\label{fig:ObsVsG}       
\end{center}
\end{figure} 

Figure~\ref{fig:TCASvsG} shows the dependence of social welfare on selected \ac{tcas} internal logic parameters DMOD and ZTHR \protect{\cite{KochenderferATC360}}.  These parameters are primarily used to define the size of safety buffers around the aircraft, and thus it makes intuitive sense to observe an increase in $F$ (in the manner that we've defined it) as these parameters are increased.  Semi net-form game modeling gives full quantitative predictions in terms of a social welfare metric.

\begin{figure}
\begin{center}
\includegraphics[width=\textwidth]{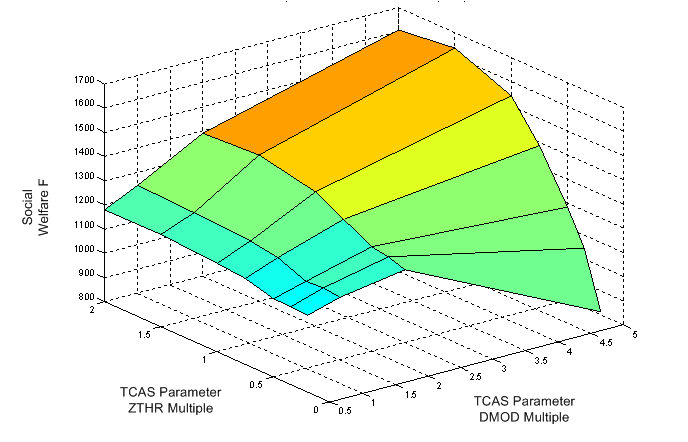}
\caption{Impacts of \ac{tcas} parameters DMOD and ZTHR on social welfare: We observe that social welfare increases as DMOD and ZTHR are increased. This agrees with our intuition since these parameters are used to define the size of safety buffers around the aircraft.}
\label{fig:TCASvsG}       
\end{center}
\end{figure}

Figure~\ref{fig:UtilWeightsVsG} plots player utility weights versus social welfare.  In general, the results agree with intuition that higher $\alpha_1$ (stronger desire to avoid collision) and lower $\alpha_2$ (weaker desire to stay on course) lead to higher social welfare.  These results may be useful in quantifying the potential benefits of training programs, regulations, incentives, and other pilot behavior-shaping efforts.  

\begin{figure}
\begin{center}
\includegraphics[width=\textwidth]{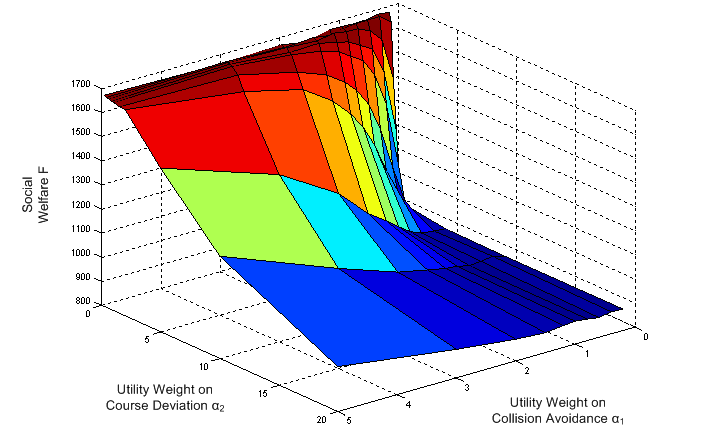}
\caption{Impacts of player utility weights (see Section~\ref{sec:UtilFcn}) on social welfare: We observe that higher $\alpha_1$ (more weight to avoiding collision) and lower $\alpha_2$ (less weight to maintaining current course) leads to higher social welfare.}
\label{fig:UtilWeightsVsG}       
\end{center}
\end{figure}

Figure~\ref{fig:LKBoMvsG} plots model parameters $M$ and $M'$ versus $F$.  Recall from our discussion in Section~\ref{sec:LKRelaxed} that these parameters can be interpreted as a measure of the pilot's rationality.  As such, we point out that these parameters are not ones that can be controlled, but rather ones that should be set as closely as possible to reflect reality.  One way to estimate the ``true" $M$ and $M'$ would be to learn them from real data.  (Learning model parameters is the subject of a parallel research project.)  A plot like Figure~\ref{fig:LKBoMvsG} allows a quick assessment of the sensitivity of $F$ to $M$ and $M'$.

\begin{figure}
\begin{center}
\includegraphics[width=\textwidth]{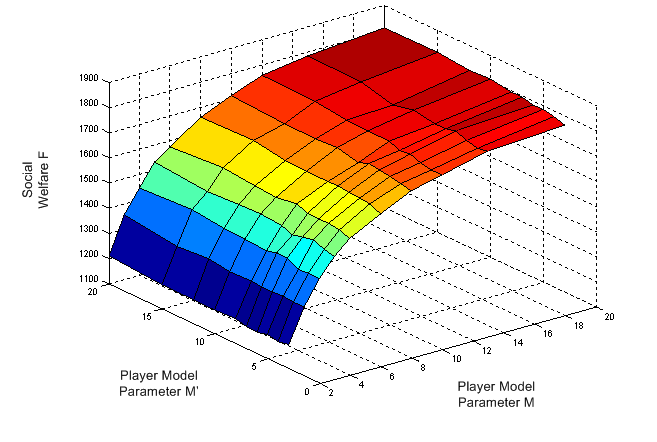}
\caption{Impacts of pilot model parameters $M$ and $M'$ (see Definition~\ref{def:unnormLwSampling}) on social welfare: We observe that as these parameters are increased, there is an increase in social welfare.  This agrees with our interpretation of $M$ and $M'$ as measures of rationality.}
\label{fig:LKBoMvsG}       
\end{center}
\end{figure}

\subsection{Potential Benefits of a Horizontal \ac{ra} System}
\label{sec:horizRA}
Recall that due to high noise in \ac{tcas}' horizontal sensors, the current \ac{tcas} system issues only vertical \acp{ra}.  In this section, we consider the potential benefits of a horizontal \ac{ra} system.  The goal of this work is not to propose a horizontal \ac{tcas} system design, but to demonstrate how semi net-form games can be used to evaluate new technologies.

In order to accomplish this, we make a few assumptions.  Without loss of generality, we refer to the first aircraft to issue an \ac{ra} as aircraft 1, and the second aircraft to issue an \ac{ra} as aircraft 2.  First, we notice that the variable $W_{TCAS_i}$ does not contain relative heading information, which is crucial to properly discriminating between various horizontal geometric configurations.  In \protect{\cite{KochenderferATC371}}, Kochenderfer et al. demonstrated that it is possible to track existing variables (range, range rate, bearing to intruder, etc.) over time using an unscented Kalman filter to estimate relative heading and velocity of two aircraft.  Furthermore, estimates of the steady-state tracking variances for these horizontal variables were provided.  For simplicity, this work does not attempt to reproduce these results, but instead simply assumes that these variables exist and are available to the system.

Secondly, until now the pilots have been restricted to making only vertical maneuvers.  This restriction is now removed, allowing pilots to choose moves that have both horizontal and vertical components.  However, we continue to assume enroute aircraft, and thus aircraft heading rates are initialized to zero at the start of the encounter.  Finally, we assume that the horizontal \ac{ra} system is an augmentation to the existing \ac{tcas} system rather than a replacement.  As a result, we first choose the vertical component using mini TCAS as was done previously, then select the horizontal \ac{ra} component using a separate process.

As a first step, we consider a reduced problem where we optimize the horizontal \ac{ra} for aircraft 2 only; aircraft 1 is always issued a maintain heading horizontal \ac{ra}.  (Considering the full problem would require backward induction, which we do not tackle at this time.)  For the game theoretic reasoning to be consistent, we make the assumption that the \ac{ra} issuing order is known to not only the \ac{tcas} systems, but also the pilots.  Presumably, the pilots would receive this order information via their intrument displays.  To optimize the \ac{ra} horizontal component for aircraft 2, we perform an exhaustive search over each of the five candidate horizontal \acp{ra} (hard left, moderate left, maintain heading, moderate right, and hard right) to determine its expected social welfare.  The horizontal \ac{ra} with the highest expected social welfare is selected and issued to the pilot.  To compute expected social welfare, we simulate a number of counterfactual scenarios of the remainder of the encounter, and then average over them.

To evaluate its performance, we compare the method described above (using exhaustive search) to a system that issues a `maintain heading' \ac{ra} to both aircraft.  Figure~\ref{fig:horizG} shows the distribution of social welfare for each system.  Not only does the exhaustive search method show a higher expected value of social welfare, it also displays an overall distribution shift, which is highly desirable.  By considering the full shape of the distribution rather than just its expected value, we gain much more insight into the behavior of the underlying system.

\begin{figure}
\begin{center}
\includegraphics[width=\textwidth,trim = 120 120 120 120]{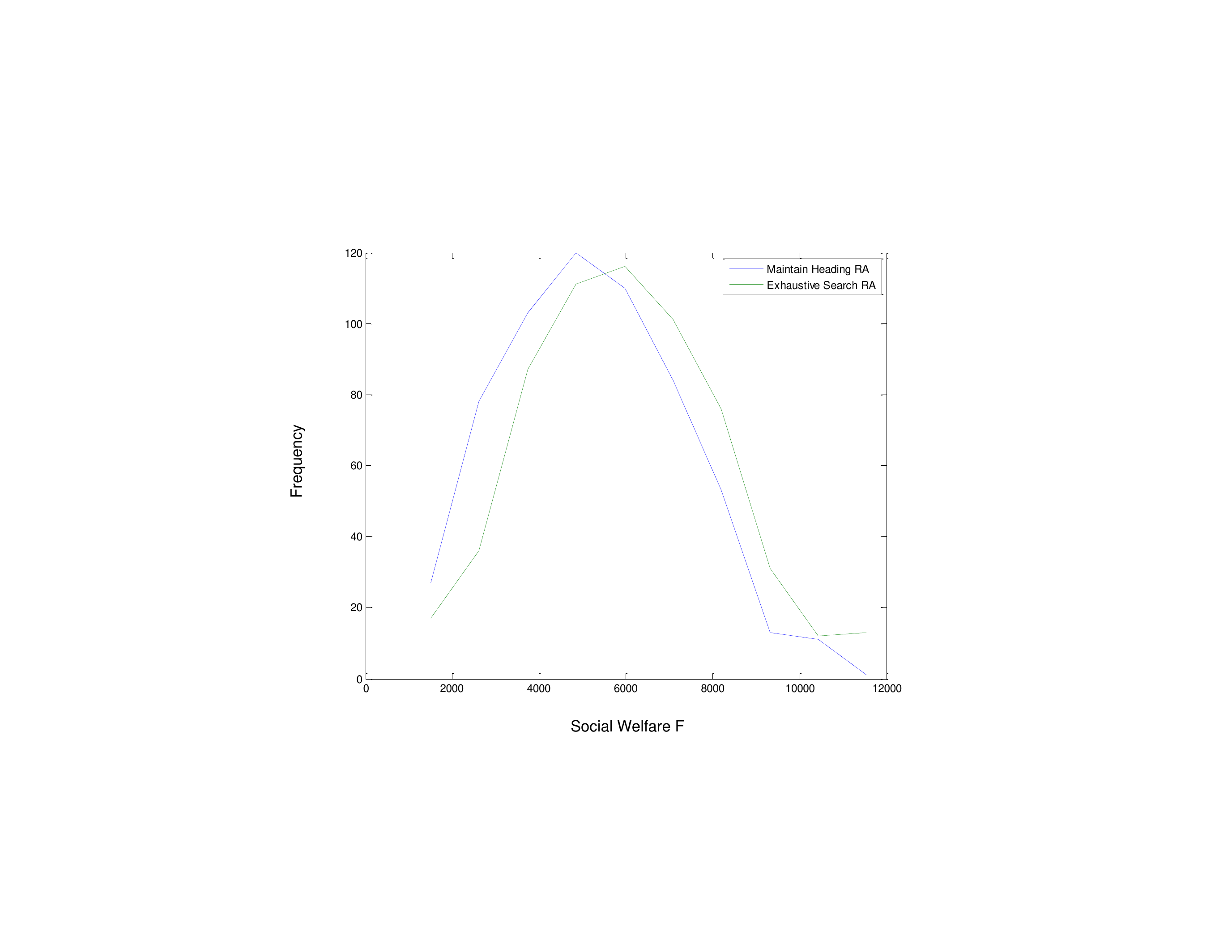}
\caption{A comparison of social welfare for two different horizontal \ac{ra} systems: Not only does the expected value of social welfare increase by using the exhaustive search method, we also observe a shift upwards in the entire probability distribution.}
\label{fig:horizG}       
\end{center}
\end{figure}

\section{Advantages of Semi Net-Form Game Modeling}
\label{sec:benefits}
There are many distinct benefits to using semi net-form game modeling.  We elaborate in the following section.

\begin{enumerate}
\item \textbf{Fully probabilistic.} Semi net-form game is a thoroughly probabilistic model, and thus represents all quantities in the system using random variables. As a result, not only are the probability distributions available for states of the Bayes net, they are also available for any metrics derived from those states.  For the system designer, the probabilities offer an additional dimension of insight for design.  For regulatory bodies, the notion of considering full probability distributions to set regulations represents a paradigm shift from the current mode of aviation operation.
\item \textbf{Modularity.} A huge strength to using a Bayes net as the basis of modeling is that it decomposes a large joint probability into smaller ones using conditional independence.  In particular, these smaller pieces have well-defined inputs and outputs, making them easily upgraded or replaced without affecting the entire net.  One can imagine an ongoing modeling process that starts by using very crude models at the beginning, then progressively refining each component into higher fidelity ones.  The interaction between components is facilitated by using the common language of probability.
\item \textbf{Computational human behavior model.} \ac{hitl} experiments (those that involve human pilots in mid- to high-fidelity simulation environments) are very tedious and expensive to perform because they involve carefully crafted test scenarios and human participants.  For the above reasons, \ac{hitl} experiments produce very few data points relative to the number needed for statistical significance.  On the other hand, semi net-form games rely on mathematical modeling and numerical computations, and thus produce data at much lower cost.
\item \textbf{Computational convenience.} Because semi net-form game algorithms are based on sampling, they enjoy many inherent advantages.  First, Monte Carlo algorithms are easily parallelized to multiple processors, making them highly scalable and powerful.  Secondly, we can improve the performance of our algorithms by using more sophisticated Monte Carlo techniques.
\end{enumerate}

\section{Conclusions and Future Work}
\label{sec:conclusion}
In this chapter, we defined a framework called ``Semi Network-Form Games," and showed how to apply that framework to predict pilot behavior in \acp{nmac}.  As we have seen, such a predictive model of human behavior enables not only powerful analyses but also design optimization.  Moreover, that method has many desirable features which include modularity, fully-probabilistic modeling capabilities, and computational convenience.

The authors caution that since this study was performed using simplified models as well as uncalibrated parameters, that further studies be pursued to verify these findings.  The authors point out that the primary focus of this work is to demonstrate the modeling technology, and thus a follow-on study is recommended to refine the model using experimental data.

In future work, we plan to further develop the ideas in semi network-form games in the following ways.  First, we will explore the use of alternative numerical techniques for calculating the conditional distribution describing a player's strategy $P(X_v \mid x_{pa(v)})$, such as using variational calculations and belief propagation~\protect{\cite{KollerBook}}.  Secondly, we wish to investigate how to learn semi net-form game model parameters from real data.  Lastly, we will develop a software library to facilitate semi net-form game modeling, analysis and design.  The goal is to create a comprehensive tool that not only enables easy representation of any hybrid system using a semi net-form game, but also houses ready-to-use algorithms for performing learning, analysis and design on those representations.  We hope that such a tool would be useful in augmenting the current verification and validation process of hybrid systems in aviation.  

By building powerful models such as semi net-form games, we hope to augment the  current qualitative methods (i.e., expert opinion, expensive \ac{hitl} experiments) with computational human models to improve safety and performance for all hybrid systems.

\begin{acknowledgement}
We give warm thanks to Mykel Kochenderfer, Juan Alonso, Brendan Tracey, James Bono, and Corey Ippolito for their valuable feedback and support.  We also thank the NASA \ac{irac} project for funding this work.
\end{acknowledgement}

\end{document}